\newtheorem{property}{\bf Property}
\newtheorem{lemma}{\bf Lemma}
\newtheorem{theorem}{\bf Theorem}
\newcommand{\bee}{\begin{eqnarray}}
\newcommand{\eee}{\end{eqnarray}}
\newcommand{\be}{\begin{equation}}
\newcommand{\ee}{\end{equation}}
\newcommand{\al}[1]{\begin{align} #1 \end{align}}
\newcommand{\equ}[1]{\begin{equation} #1 \end{equation}}
\newcommand{\mb}{\mathbf}
\newcommand{\bs}{\boldsymbol}
\newcommand{\wt}{\widetilde}
\newcommand{\nnb}{\nonumber}
\newcommand{\qa}{{\bf a}}
\newcommand{\qe}{{\bf e}}
\newcommand{\qh}{{\bf h}}
\newcommand{\qn}{{\bf n}}
\newcommand{\qr}{{\bf r}}
\newcommand{\qs}{{\bf s}}
\newcommand{\qu}{{\bf u}}
\newcommand{\qv}{{\bf v}}
\newcommand{\qx}{{\bf x}}
\newcommand{\qy}{{\bf y}}
\newcommand{\qA}{{\bf A}}
\newcommand{\qB}{{\bf B}}
\newcommand{\qC}{{\bf C}}
\newcommand{\qD}{{\bf D}}
\newcommand{\qF}{{\bf F}}
\newcommand{\qH}{{\bf H}}
\newcommand{\qI}{{\bf I}}
\newcommand{\qK}{{\bf K}}
\newcommand{\qP}{{\bf P}}
\newcommand{\qQ}{{\bf Q}}
\newcommand{\qR}{{\bf R}}
\newcommand{\qS}{{\bf S}}
\newcommand{\qU}{{\bf U}}
\newcommand{\qW}{{\bf W}}
\newcommand{\qX}{{\bf X}}
\begin{document}
%
\title{Transmitter Optimization for Achieving Secrecy Capacity in Gaussian MIMO Wiretap Channels}

\author{\IEEEauthorblockN{Jiangyuan Li and Athina Petropulu }\\
\IEEEauthorblockA{Department of Electrical and Computer Engineering\\
Drexel University, Philadelphia, PA 19104}\\
\IEEEauthorblockA{Email: eejyli@yahoo.com.cn, athina@coe.drexel.edu}
}

\maketitle

\begin{abstract}
We consider a  Gaussian multiple-input multiple-output (MIMO) wiretap channel model,
where there exists a transmitter, a legitimate receiver and an eavesdropper, each node equipped with multiple antennas.
We study the problem of finding the optimal input covariance matrix that achieves secrecy capacity subject to a power constraint,
which leads to a non-convex optimization problem that is in general difficult to solve. Existing results for this problem address the case in
which the transmitter and the legitimate receiver have two antennas each and the eavesdropper has one antenna.
For the general cases, it has been shown that
the optimal input covariance matrix has low rank when the difference between the Grams of
the eavesdropper and the legitimate receiver channel matrices is indefinite or semi-definite,
while it may have low rank or full rank when the difference is positive definite.
In this paper, the aforementioned non-convex optimization problem is investigated.
In particular, for the multiple-input single-output (MISO) wiretap channel, the optimal input covariance matrix is obtained in closed form.
For general cases, we derive the necessary conditions for the optimal input covariance matrix consisting of a set of equations.
For the case in which the transmitter has two antennas, the derived necessary conditions can result in
a closed form solution;
For the case in which the difference between the Grams is indefinite and has all negative eigenvalues except one positive eigenvalue,
the optimal input covariance matrix has rank one and can be obtained in closed form;
For other cases,
the solution is proved to be a fixed point of a mapping from a convex set to itself and an iterative procedure is provided to
search for it.
Numerical results are presented to illustrate the proposed
theoretical findings.
\end{abstract}

\begin{IEEEkeywords}
Secrecy capacity, Gaussian MIMO wiretap channel, transmitter optimization, physical layer based security.
\end{IEEEkeywords}
\IEEEpeerreviewmaketitle

\section{Introduction}

Wireless physical (PHY) layer based security from a information-theoretic point
of view has received considerable attention recently, e.g., \cite{Liang3}-\cite{Liu},
and the comprehensive overview in \cite{Liang2}.
Wireless PHY layer based security approaches exploit the physical characteristics of the
wireless channel to enhance the security of communication systems.
The wiretap channel, first introduced and studied by Wyner \cite{Wyner}, is the most basic physical layer model that captures
the problem of communication security.
Wyner showed that when an eavesdropper's channel is a degraded version of the main channel,
the source and destination can achieve a positive perfect information rate
(secrecy rate).
The maximal rate of secrecy rate from the source to the destination
is defined as the {\it secrecy capacity} and for the degraded wiretap channel is given as the
 largest between  zero and
the difference between the capacity at the legitimate receiver and the capacity at the eavesdropper.
The Gaussian wiretap channel, in which the outputs at the legitimate
receiver and at the eavesdropper are corrupted by additive white
Gaussian noise (AWGN), was studied in \cite{Hellman}.
Along the the same line, the Gaussian MIMO wiretap channel was investigated
and the secrecy capacity of the MIMO wiretap channel was established
in terms of an optimization problem over all possible input covariance matrices \cite{Khisti}, \cite{Hassibi}.
In \cite{Khisti}, \cite{Hassibi}, the Gaussian MIMO wiretap channel model
was given as $\qy_i=\qH_i\qx+\qn_i$, $i=1,2$ where $\qn_i$ is AWGN with zero mean and covariance $\sigma^2\qI$,
and the power constraint $\mathrm{Tr}(\qR_x)\le P$ was used, where $\qR_x$ is the input covariance matrix.
An alternative expression of secrecy capacity was derived in \cite{Shamai}, \cite{Poor}
for another Gaussian MIMO wiretap channel model, i.e.,  $\qy_i=\qx+\qv_i$, $i=1,2$
where $\qv_i$ is additive Gaussian noise (AGN) with zero mean and invertible covariance $\qW_i$.
Further, in \cite{Poor}, the power covariance constraint $\qR_x\preceq \qS$ was used where $\qS$ is a given matrix and
$\qR_x\preceq \qS$ denotes that $\qS-\qR_x$ is positive semi-definite, which allowed
for the secrecy capacity to be obtained in
closed form rather than as a solution to an optimization problem.

For the former Gaussian MIMO wiretap channel model, i.e.,  $\qy_i=\qH_i\qx+\qn_i$ and the power constraint $\mathrm{Tr}(\qR_x)\le P$,
finding the optimal input covariance matrix that achieves secrecy capacity
leads to a non-convex optimization problem. This problem is in general difficult to solve.
The solution of a special case in which the transmitter and the legitimate receiver each has two antennas and the
eavesdropper has one antenna was given in \cite{Shafiee}.
In \cite{Hassibi2}, it was pointed out that
the optimal input covariance matrix has low rank when the difference between the Grams of
the eavesdropper and the legitimate receiver channel matrices is indefinite or semi-definite,
based on the assumption that the Grams both have full rank.

In this paper, we investigate the aforementioned non-convex optimization problem.
In particular, for the multiple-input single-output (MISO) wiretap channel, we obtain the optimal input covariance matrix in closed form.
For general MIMO case, we derive the necessary conditions for the optimal solution consisting of a set of equations.
Those conditions   result in
a closed form solution for $n_T=2$. For the more general case in which the difference of the Grams of
the eavesdropper and the legitimate receiver channel matrices is indefinite and has all negative eigenvalues except one positive eigenvalue,
we prove that the optimal input covariance matrix has rank one and can be obtained in closed form.
Otherwise,
we prove that the solution is a fixed point of a mapping from a convex set to itself and provide an iterative procedure to
search for it.

The remainder of this paper is organized as follows. The
mathematical model is introduced in \S\ref{sec:2}.
In \S\ref{sec:3} the optimal input covariance matrix is obtained in closed form
for Gaussian MISO wiretap channel.
In \S\ref{sec:4} we derive the necessary conditions for the optimal solution consisting of a set of equations
for the Gaussian MIMO wiretap channel. In \S\ref{nT2},
we obtain a closed form solution for the case in which the transmitter has
two antennas. In \S\ref{OnePosiEig},
for the case in which the difference of the Grams is indefinite and has all negative eigenvalues
except one positive eigenvalue,
we prove that the optimal input covariance matrix has rank one and can be obtained in closed form.
Numerical results in \S\ref{sec:sim} illustrate the proposed algorithm. Finally,
\S\ref{sec:conclu} gives a brief conclusion.

\subsection{Notation}

Upper case and lower case bold symbols denote matrices and vectors, respectively.
Superscripts $\ast$, $T$ and $\dagger$
denote respectively conjugate, transposition and conjugate transposition. $\mathrm{det}(\qA)$ and
$\mathrm{Tr}({\mb A})$ denote the determinant and trace of matrix $\mb A$, respectively.
$\lambda_{\max}(\qA)$ denotes the largest eigenvalue of $\qA$.
${\mb A}\succeq 0$ and ${\mb A}\succ 0$ mean that ${\mb A}$ is a
Hermitian positive semi-definite and positive definite matrix, respectively.
$\qA\succeq \qB$ denotes that $\qA-\qB$ is a positive semi-definite matrix.
${\mathbf A}\nsucc 0$ denotes that matrix $\qA$ is not positive definite.
$\mathrm{rank}(\qA)$ denotes the rank of matrix $\qA$.
$\mathrm{diag}(\qv)$ denotes a diagonal matrix with diagonal entries consisting of the elements of $\qv$.
$\|\qa\|$ denotes Euclidean norm of vector $\qa$, while $\|\qA\|$ denotes Frobenius norm of matrix $\qA$.
$\qI_n$ denotes the identity matrix of order $n$
(the subscript is dropped when the dimension is obvious).
$a\to b$ means that $a$ goes to $b$.
Given a matrix $\qA$, the matrix $\qA^\dagger\qA$ is called a Gram.

\section{System Model and Formulations}\label{sec:2}

Consider a MIMO wiretap channel where the transmitter is equipped with $n_T$ antennas, while
the legitimate receiver and an eavesdropper have  $n_R$ and $n_E$ antennas, respectively.
The received signals at the legitimate receiver and the eavesdropper are respectively given by
\equ{
\qy_R=\qH_R\qx+\qn_R,\, \qy_E=\qH_E\qx+\qn_E
}
where $\qH_R$ ($n_R\times n_T$), $\qH_E$ ($n_E\times n_T$) are respectively channel matrices between the transmitter and legitimate receiver,
and between the transmitter and eavesdropper;
$\qx$ is the $n_T\times 1$ transmitted signal vector with zero mean and $n_T \times n_T$ covariance matrix $\qR_x\succeq 0$;
$\qn_R$ and $\qn_E$ are circular Gaussian noise vectors with zero mean and covariance matrices
$\sigma^2\qI_{n_R}$ and $\sigma^2\qI_{n_E}$, respectively.
We assume the power constraint is $P$, namely, $\mathrm{Tr}(\qR_x)=P$. It is easy to verify that the problem under
$\mathrm{Tr}(\qR_x)\le P$ is equivalent to that under $\mathrm{Tr}(\qR_x)=P$.
We consider the scenario in which the transmitter has perfect short-term channel state information (CSI).

The secrecy capacity is defined as \cite{Hassibi}
\equ{
C_s\triangleq\max_{\qR_x\succeq 0, \mathrm{Tr}(\qR_x)=P}\ C_s(\qR_x)\label{SecCapObj}
}
where
\begin{equation}\label{SecRate}
    C_s(\qR_x)=\log\mathrm{det}(\qI_{n_R}+\qH_R\qR_x\qH_R^\dagger/\sigma^2)
    -\log\mathrm{det}(\qI_{n_E}+\qH_E\qR_x\qH_E^\dagger/\sigma^2)
\end{equation}
is the secrecy rate.

The transmitter optimization problem is to determine $\qR_x$ that maximizes the secrecy rate, i.e.,
achieves secrecy capacity.
The optimization makes sense when the secrecy capacity is positive.
We assume $\qH_R^\dagger\qH_R-\qH_E^\dagger\qH_E\ne 0$ since otherwise $C_s(\qR_x)\equiv 0$.
Whether $C_s>0$ depends on the  difference between the Grams of the legitimate receiver and eavesdropper channel matrices, i.e.,
$\qH_R^\dagger\qH_R-\qH_E^\dagger\qH_E$.
The following lemma provides  the conditions to maintain $C_s>0$.

\begin{lemma}\label{lem:1}
{\it The sufficient and necessary condition for $C_s>0$ is: $\qH_R^\dagger\qH_R-\qH_E^\dagger\qH_E$
is positive semi-definite or indefinite.}
\end{lemma}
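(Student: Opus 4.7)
The plan is to reduce both directions to a statement about the signature of $\qD \triangleq \qH_R^\dagger\qH_R - \qH_E^\dagger\qH_E$ by rewriting $C_s(\qR_x)$ via Sylvester's identity $\det(\qI+\qH\qR\qH^\dagger)=\det(\qI+\qR^{1/2}\qH^\dagger\qH\qR^{1/2})$, which lets both terms in \eqref{SecRate} live on the same ($n_T \times n_T$) space and makes the comparison directly driven by $\qA \triangleq \qH_R^\dagger\qH_R$ versus $\qB \triangleq \qH_E^\dagger\qH_E$. The observation that splits the two directions is the dichotomy: $\qD$ is positive semi-definite or indefinite iff $\qD$ has at least one strictly positive eigenvalue; the complement (together with $\qD\neq 0$, which Lemma 1's setup already excludes) is exactly $\qD \preceq 0$ with $\qD \neq 0$, i.e.\ $\qA \preceq \qB$.

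For sufficiency, assume $\qD$ has a positive eigenvalue $\lambda>0$ with unit eigenvector $\qv$. I would construct an explicit feasible input, namely the rank-one $\qR_x = P\,\qv\qv^\dagger$, which obviously satisfies $\qR_x\succeq 0$ and $\mathrm{Tr}(\qR_x)=P$. Substituting into \eqref{SecRate} and using $\det(\qI+\qH\qv\qv^\dagger\qH^\dagger/\sigma^2) = 1+\qv^\dagger\qH^\dagger\qH\qv/\sigma^2$ (matrix determinant lemma, or simply that the matrix is identity plus a rank-one term) gives
\begin{equation}
C_s(\qR_x)=\log\frac{1+P\qv^\dagger\qA\qv/\sigma^2}{1+P\qv^\dagger\qB\qv/\sigma^2}.
\end{equation}
Because $\qv^\dagger\qA\qv - \qv^\dagger\qB\qv = \qv^\dagger\qD\qv = \lambda > 0$ while $\qv^\dagger\qB\qv\ge 0$, the numerator strictly exceeds the denominator, so $C_s \ge C_s(\qR_x) > 0$.

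For necessity, I take the contrapositive: assume $\qD$ is neither positive semi-definite nor indefinite, so (since $\qD\neq 0$) $\qA \preceq \qB$. For any feasible $\qR_x$, conjugating by the Hermitian square root $\qR_x^{1/2}$ preserves the order, giving $\qR_x^{1/2}\qA\qR_x^{1/2}\preceq \qR_x^{1/2}\qB\qR_x^{1/2}$. Both sides are positive semi-definite, and on PSD matrices $\qX\mapsto \det(\qI+\qX/\sigma^2)$ is monotone (eigenvalues of $\qI+\qX/\sigma^2$ are dominated termwise by those of $\qI+\qY/\sigma^2$ when $\qX\preceq\qY$). Applying Sylvester's identity on each term of \eqref{SecRate} converts them exactly into $\det(\qI+\qR_x^{1/2}\qA\qR_x^{1/2}/\sigma^2)$ and $\det(\qI+\qR_x^{1/2}\qB\qR_x^{1/2}/\sigma^2)$, so $C_s(\qR_x)\le 0$ for every feasible $\qR_x$, hence $C_s\le 0$, contradicting $C_s>0$.

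I do not expect a serious technical obstacle here; the only subtlety worth stating carefully is the trichotomy argument (that "not PSD and not indefinite, with $\qD\neq 0$" is exactly $\qA\preceq\qB$ with strict inequality somewhere), and invoking the monotonicity of $\det(\qI+\cdot)$ on PSD matrices rather than proving it inline. The clean sufficiency witness is the rank-one beamforming covariance along a positive-eigenvalue direction of $\qD$, which foreshadows the rank-one solutions analyzed later in the paper.
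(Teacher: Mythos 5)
Your proposal is correct. The necessity direction is essentially identical to the paper's: the paper also argues via the contrapositive, conjugating the negative semi-definite difference by $\qR_x^{1/2}$, invoking $\det(\qI+\qA\qB)=\det(\qI+\qB\qA)$ to move both terms of \eqref{SecRate} onto the $n_T\times n_T$ space, and concluding $C_s(\qR_x)\le 0$ from determinant monotonicity under the Loewner order. Where you genuinely diverge is sufficiency: the paper rewrites $C_s(\qR_x)$ as a single quantity, $\log\det\bigl(\qI_{n_T}+\sigma^{-2}\qR_x^{1/2}(\qH_R^\dagger\qH_R-\qH_E^\dagger\qH_E)\qR_x^{1/2}(\qI_{n_T}+\qR_x^{1/2}\qH_E^\dagger\qH_E\qR_x^{1/2}/\sigma^2)^{-1}\bigr)$, and then constructs $\qR_x=\qU_r\qD_x\qU_r^\dagger$ from the eigen-decomposition of the difference with zeros placed on the negative eigen-directions, whereas you exhibit the rank-one beamforming covariance $\qR_x=P\,\qv\qv^\dagger$ along a single positive-eigenvalue eigenvector and evaluate both determinants with the matrix determinant lemma. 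Your witness has the advantage of making strict positivity immediate ($\qv^\dagger\qD\qv=\lambda>0$ forces numerator strictly larger than denominator), while the paper's statement that it ``suffices'' to make the conjugated difference positive semi-definite is slightly loose on strictness (a conjugation that annihilates the difference would only give $C_s(\qR_x)=0$, so one must additionally note the chosen $\qR_x$ retains the positive directions). Your rank-one construction also foreshadows the rank-one optimal covariances derived later (e.g., Theorem \ref{theo:MISO} and Theorem \ref{theo:2}), whereas the paper's single-log-det rewriting is the same algebraic manipulation reused in its later arguments such as Property \ref{prop:Theta}.
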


Please see Appendix \ref{proofLem1} for details.

Let us assume that the channel matrices $\qH_R$ and $\qH_E$ have been normalized so that $\mathrm{Tr}(\qH_R^\dagger\qH_R)=n_T$.
We denote the signal-to-noise ratio (SNR) $\rho \triangleq P/\sigma^2$ and let $\qR_x=P\qQ$. The constraints in (\ref{SecCapObj}) now become
$\qQ\succeq 0$, $\mathrm{Tr}(\qQ)=1$. The secrecy rate
maximization problem can be written as
\equ{
\max_{\qQ\succeq 0, \mathrm{Tr}(\qQ)=1} C_s(\qQ)=
\log\mathrm{det}(\qI_{n_R}+\rho\qH_R\qQ\qH_R^\dagger)-\log\mathrm{det}(\qI_{n_E}+\rho\qH_E\qQ\qH_E^\dagger).\label{SecRateMax1}
}
According to Lemma \ref{lem:1}, we assume
$\qH_R^\dagger\qH_R-\qH_E^\dagger\qH_E$ is positive semi-definite or indefinite (except the MISO channel).
In \cite{Hassibi2}, the authors assumed that $\qH_R^\dagger\qH_R$ and $\qH_E^\dagger\qH_E$ are both positive definite,
and hence they are both invertible.
Here, we do not make that assumption.
In fact, when $n_T>n_R$ (and/or $n_T>n_E$), $\qH_R^\dagger\qH_R$ (and/or $\qH_E^\dagger\qH_E$)
always have low rank and hence are not invertible.

We also denote the feasible set of (\ref{SecRateMax1}) as
\equ{
\Omega=\{\qQ|\qQ\succeq 0, \mathrm{Tr}(\qQ)=1\}
}
which is a convex set.

\section{Closed Form Secrecy Capacity of MISO Wiretap Channel}\label{sec:3}

We first provide a lemma that will be used here and in the proof of Theorem \ref{theo:2} later.

\begin{lemma}\label{lem:2}
{\it Let $\qr$ and $\qs$ be two known non-zero vectors, and $\qr\qr^\dagger-\qs\qs^\dagger\ne 0$.
\begin{enumerate}
    \item [(i)] If $\qr=\xi\qs$ for a certain scalar $\xi$, $\qr\qr^\dagger-\qs\qs^\dagger$ has only one
    nonzero eigenvalue equal to $(|\xi|^2-1)\|\qs\|^2$ with the associated eigenvector $\qs/\|\qs\|$;
    \item [(ii)] If $\qr^\dagger\qs=0$, $\qr\qr^\dagger-\qs\qs^\dagger$ has only two nonzero eigenvalues, i.e.,
    $\eta_1=\|\qr\|^2$, $\eta_2=-\|\qs\|^2$ with associated eigenvectors $\qr/\|\qr\|$, $\qs/\|\qs\|$, respectively.
    \item [(iii)] If neither $\qr=\xi\qs$ nor $\qr^\dagger\qs=0$, $\qr\qr^\dagger-\qs\qs^\dagger$ has only two nonzero
eigenvalues, i.e.,
$\eta_1=\|\qr\|^2-|c_2||\qr^\dagger\qs|>0,
\eta_2=\|\qr\|^2-|c_4||\qr^\dagger\qs|<0$
with the associated eigenvectors
$\qe_1=c_1^{-1/2}(\qr+|c_2|e^{\mathrm{i}(\pi-\varphi)}\qs),
\qe_2=c_3^{-1/2}(\qr+|c_4|e^{\mathrm{i}(\pi-\varphi)}\qs)$, respectively,
where
$\varphi$ is the argument of
$\qr^\dagger\qs$, $\mathrm{i}=\sqrt{-1}$,
$c_1=\|\qr\|^2+|c_2|^2\|\qs\|^2-2|c_2||\qr^\dagger\qs|$,
$|c_2|
=(\|\qr\|^2+\|\qs\|^2-\sqrt{(\|\qr\|^2+\|\qs\|^2)^2-4|\qr^\dagger\qs|^2})/(2|\qr^\dagger\qs|)$,
$c_3=\|\qr\|^2+|c_4|^2\|\qs\|^2-2|c_4||\qr^\dagger\qs|$,
$|c_4|=(\|\qr\|^2+\|\qs\|^2+\sqrt{(\|\qr\|^2+\|\qs\|^2)^2-4|\qr^\dagger\qs|^2})/(2 |\qr^\dagger\qs|)$.
\end{enumerate}}
\end{lemma}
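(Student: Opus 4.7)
Parts (i) and (ii) admit direct verifications. In (i), the hypothesis $\qr=\xi\qs$ collapses the matrix to $(|\xi|^2-1)\qs\qs^\dagger$, whose sole nonzero eigenvalue and eigenvector are those of the rank-one operator $\qs\qs^\dagger$ scaled by $|\xi|^2-1$. In (ii), the orthogonality $\qr^\dagger\qs=0$ lets me apply the matrix directly to $\qr$ and to $\qs$, yielding $(\qr\qr^\dagger-\qs\qs^\dagger)\qr=\|\qr\|^2\qr$ and $(\qr\qr^\dagger-\qs\qs^\dagger)\qs=-\|\qs\|^2\qs$; since the range is contained in $\mathrm{span}\{\qr,\qs\}$, no other nonzero eigenvalues exist.

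For part (iii), the backbone of the argument is that $\qr\qr^\dagger-\qs\qs^\dagger$ has rank at most two with range inside $\mathrm{span}\{\qr,\qs\}$, a subspace that is genuinely two-dimensional under the case hypothesis $\qr\neq\xi\qs$. I will parametrize any candidate eigenvector as $\qe=\qr+c\qs$, apply the operator, and separately match the coefficients of the independent vectors $\qr$ and $\qs$. This yields the coupled pair $\eta=\|\qr\|^2+c(\qr^\dagger\qs)$ and $\eta c=-(\qs^\dagger\qr)-c\|\qs\|^2$, which upon eliminating $\eta$ reduces to the scalar quadratic
\[
(\qr^\dagger\qs)\,c^2+(\|\qr\|^2+\|\qs\|^2)\,c+(\qs^\dagger\qr)=0.
\]

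To obtain the stated form of the roots I will try the ansatz $c=|c|\,e^{\mathrm{i}(\pi-\varphi)}$ with $\varphi=\arg(\qr^\dagger\qs)$. This choice aligns the three terms to the common phase factor $e^{-\mathrm{i}\varphi}$ and reduces the equation to the real quadratic $|\qr^\dagger\qs|\,|c|^2-(\|\qr\|^2+\|\qs\|^2)|c|+|\qr^\dagger\qs|=0$, whose two positive real roots are exactly $|c_2|$ and $|c_4|$. Since a degree-two polynomial has at most two roots in $\mathbb{C}$, these two solutions exhaust the original complex quadratic. The eigenvalues then read off from the first scalar equation as $\eta=\|\qr\|^2-|c|\,|\qr^\dagger\qs|$, and the normalizing constants $c_1,c_3$ are obtained by direct expansion of $\|\qr+|c_{2,4}|e^{\mathrm{i}(\pi-\varphi)}\qs\|^2$, the two cross terms combining to $-2|c_{2,4}|\,|\qr^\dagger\qs|$.

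The main technical point I anticipate is verifying the sign claims $\eta_1>0>\eta_2$. Rewriting
\[
\eta_{1,2}=\tfrac12(\|\qr\|^2-\|\qs\|^2)\pm\tfrac12\sqrt{(\|\qr\|^2+\|\qs\|^2)^2-4|\qr^\dagger\qs|^2}
\]
and invoking the strict Cauchy--Schwarz inequality $|\qr^\dagger\qs|<\|\qr\|\|\qs\|$, which holds because $\qr$ and $\qs$ are linearly independent in case (iii) and also guarantees a strictly positive discriminant so the two roots are distinct, the square root strictly exceeds $|\|\qr\|^2-\|\qs\|^2|$, and both sign statements follow symmetrically by splitting into the subcases $\|\qr\|\gtreqqless\|\qs\|$.
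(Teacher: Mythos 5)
Your proof is correct and follows the same route the paper outlines (the paper omits details): parts (i)--(ii) by direct verification, and for (iii) using that the range lies in $\mathrm{span}\{\qr,\qs\}$ and seeking eigenvectors of the form $\qr+c\qs$, which yields the quadratic in $c$, the phase ansatz $c=|c|e^{\mathrm{i}(\pi-\varphi)}$, and the stated eigenvalues, eigenvectors, and sign conclusions via strict Cauchy--Schwarz. Your write-up in fact supplies the steps the paper leaves implicit (exhaustiveness of the two roots, the sign check, and the normalization constants), so no gap remains.
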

The proof is simple, therefore, omitted for the sake of brevity. But we outline the proof here. For the case (i), (ii), the proof is obvious.
For the case (iii), first, we can show $\qr\qr^\dagger-\qs\qs^\dagger$ has rank two, thus it has only two nonzero eigenvalues.
Second, we assume the eigenvector has the form of a linear combination of $\qr$ and $\qs$, and then show that this is indeed the case.

Before discussing the general MIMO wiretap channel, we analyze a special case, i.e.,
the MISO wiretap channel in which
the legitimate receiver and eavesdropper both have a single antenna, i.e., $n_R=n_E=1$. Denote the channel vectors as $\qh_R$ and $\qh_E$.
We give the following theorem.
\begin{theorem}\label{theo:MISO}
{\it The closed form expression for secrecy capacity of MISO wiretap channel is given by
\equ{
C_s=\log\frac{b+\sqrt{b^2-4ac}}{2a}\label{CsMISO}
}
where $a=1+\rho\|\qh_E\|^2$, $b=2+\rho\|\qh_R\|^2+\rho\|\qh_E\|^2+\rho^2(\|\qh_R\|^2\|\qh_E\|^2-|\qh_R^\dagger\qh_E|^2)$
and $c=1+\rho\|\qh_R\|^2$.}
\end{theorem}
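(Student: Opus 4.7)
The plan is to reduce the MISO problem to a generalized Rayleigh quotient and then to a scalar quadratic equation. First I would exploit the constraint $\mathrm{Tr}(\qQ)=1$ to recast the secrecy rate as a log-ratio of traces:
\[
1 + \rho\,\qh_R^\dagger\qQ\qh_R = \mathrm{Tr}\big((\qI+\rho\qh_R\qh_R^\dagger)\qQ\big),\qquad 1 + \rho\,\qh_E^\dagger\qQ\qh_E = \mathrm{Tr}\big((\qI+\rho\qh_E\qh_E^\dagger)\qQ\big).
\]
With $\qA := \qI + \rho\qh_R\qh_R^\dagger$ and $\qB := \qI + \rho\qh_E\qh_E^\dagger \succ 0$, the problem becomes $\max_{\qQ\in\Omega}\log[\mathrm{Tr}(\qA\qQ)/\mathrm{Tr}(\qB\qQ)]$. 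Taking any spectral decomposition $\qQ=\sum_i\mu_i\qu_i\qu_i^\dagger$, positivity of each $\qu_i^\dagger\qB\qu_i$ together with the elementary bound $(\sum_i\mu_i a_i)/(\sum_i\mu_i b_i)\le\max_i a_i/b_i$ (which is just a convex-combination inequality, valid since $a_i=\qu_i^\dagger\qA\qu_i>0$) shows the optimum is attained at a rank-one matrix $\qQ=\qu\qu^\dagger$ with $\|\qu\|=1$.

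Second, the reduced problem $\max_{\|\qu\|=1}(\qu^\dagger\qA\qu)/(\qu^\dagger\qB\qu)$ is a generalized Rayleigh quotient whose maximum equals the largest generalized eigenvalue $\lambda_\ast$ of the pencil $(\qA,\qB)$, so $C_s = \log\lambda_\ast$. The value $\lambda_\ast$ is the largest root of
\[
\det(\qA - \lambda\qB) = \det\big((1-\lambda)\qI + \rho\qh_R\qh_R^\dagger - \rho\lambda\qh_E\qh_E^\dagger\big) = 0.
\]
Writing the rank-two perturbation as $\qH\qD\qH^\dagger$ with $\qH=[\qh_R,\qh_E]$ and $\qD=\mathrm{diag}(\rho,-\rho\lambda)$, for $\lambda\ne 1$ Sylvester's identity $\det(\qI_{n_T}+\qX\qY)=\det(\qI_2+\qY\qX)$ collapses the $n_T\times n_T$ determinant to a $2\times 2$ one in the Gram matrix $\qH^\dagger\qH$ whose entries are $\|\qh_R\|^2,\|\qh_E\|^2,\qh_R^\dagger\qh_E$.

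Finally, I would expand that $2\times 2$ determinant and collect powers of $\lambda$; the routine computation yields exactly $a\lambda^2 - b\lambda + c = 0$ with $a,b,c$ as in the statement, and the larger root $(b+\sqrt{b^2-4ac})/(2a)$ is $\lambda_\ast$. I do not anticipate a serious obstacle: the rank-one reduction rests only on the elementary inequality above, Sylvester's identity is standard, and the degenerate case $\qh_R\parallel\qh_E$ (where the rank-two argument collapses) is handled by continuity of the formula and gives $\lambda_\ast=\max\{1,(1+\rho\|\qh_R\|^2)/(1+\rho\|\qh_E\|^2)\}$. A quick sanity check via Cauchy--Schwarz gives $a-b+c=-\rho^2(\|\qh_R\|^2\|\qh_E\|^2-|\qh_R^\dagger\qh_E|^2)\le 0$, which forces $\lambda_\ast\ge 1$, consistent with Lemma~\ref{lem:1}.
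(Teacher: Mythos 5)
Your proposal is correct, but it takes a genuinely different route from the paper. The paper treats (\ref{SecRateMaxMISO1}) as a fractional program in Dinkelbach's sense: it introduces the parametric problem $F(\alpha)$ in (\ref{SecRateMaxMISO3}), shows its maximizer is rank one along the top eigenvector of $\qh_R\qh_R^\dagger-\alpha\qh_E\qh_E^\dagger$, invokes Lemma~\ref{lem:2} to write that eigenvalue in closed form, and then solves the scalar equation $F(\alpha)=0$ to get $\alpha^\circ=(b+\sqrt{b^2-4ac})/(2a)$ and $C_s=\log\alpha^\circ$. You instead exploit $\mathrm{Tr}(\qQ)=1$ to make both numerator and denominator affine in $\qQ$, use the mediant (convex-combination) inequality to conclude the optimum is rank one, reduce to the generalized Rayleigh quotient of the pencil $(\qI+\rho\qh_R\qh_R^\dagger,\ \qI+\rho\qh_E\qh_E^\dagger)$, and collapse $\det(\qA-\lambda\qB)=0$ via Sylvester's identity to the quadratic $a\lambda^2-b\lambda+c=0$; I checked the expansion and it does give exactly the stated $a,b,c$, and your observation $p(1)=a-b+c\le 0$ guarantees the larger root dominates the eigenvalue $1$ of multiplicity $n_T-2$, so the degenerate case $\qh_R\parallel\qh_E$ is covered directly (no continuity argument is even needed). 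What each approach buys: the paper's route reuses Lemma~\ref{lem:2}, which it needs again in the proofs of Theorems~\ref{theo:MISOopt} and \ref{theo:2}, and produces the optimal beamformer $\qe_1$ explicitly; your route avoids both Dinkelbach's theorem and Lemma~\ref{lem:2}, is more self-contained, and makes transparent the structure $C_s=\log\lambda_{\max}\{(\qI+\qS_E)^{-1}(\qI+\qS_R)\}$ with the optimal $\qu$ the corresponding generalized eigenvector --- precisely the form the paper later derives separately for the rank-one regimes in \S\ref{nT2indef} and Theorem~\ref{theo:2}, so your argument also foreshadows those results.
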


\begin{proof}
The secrecy rate maximization problem can be written as
\equ{
\max_{\qQ\succeq 0, \mathrm{Tr}(\qQ)=1}\ C_s(\qQ)=\log\frac{1+\rho\qh_R^\dagger\qQ\qh_R}{1+\rho\qh_E^\dagger\qQ\qh_E}\label{SecRateMaxMISO1}
}
which is a fractional program \cite{Dinkelbach} associated with the following parametric problem
\equ{
F(\alpha)=\max_{\qQ\succeq 0, \mathrm{Tr}(\qQ)=1}\ \left[1+\rho\qh_R^\dagger\qQ\qh_R-\alpha(1+\rho\qh_E^\dagger\qQ\qh_E)\right] \label{SecRateMaxMISO3}
}
where $\alpha>0$. Let $\alpha^\circ$ be the unique root
of $F(\alpha)=0$.
According to \cite{Dinkelbach}, the optimal $\qQ$ corresponding to  $F(\alpha^\circ)$ also optimizes (\ref{SecRateMaxMISO1}).
  Based on the fact that $\qh^\dagger\qQ\qh=\mathrm{Tr}(\qQ\qh\qh^\dagger)$ for any vector $\qh$,
we  rewrite the optimization problem (\ref{SecRateMaxMISO3}) as
\equ{
F(\alpha)=\max_{\qQ\succeq 0, \mathrm{Tr}(\qQ)=1}\ \left[1-\alpha+\rho\mathrm{Tr}\{\qQ(\qh_R\qh_R^\dagger-\alpha\qh_E\qh_E^\dagger)\}\right].
\label{SecRateMaxMISO4}
}
By eigen-decomposition $\qh_R\qh_R^\dagger-\alpha\qh_E\qh_E^\dagger=\qU_{\alpha}\qD_{\alpha}\qU_{\alpha}^\dagger$
and letting $\qQ_{\alpha}=\qU_{\alpha}^\dagger\qQ\qU_{\alpha}$, we obtain $\qQ_{\alpha}\succeq 0$, $\mathrm{Tr}(\qQ_{\alpha})=1$,
$\qQ=\qU_{\alpha}\qQ_{\alpha}\qU_{\alpha}^\dagger$. It holds
\equ{
\mathrm{Tr}\{\qQ(\qh_R\qh_R^\dagger-\alpha\qh_E\qh_E^\dagger)\}=\mathrm{Tr}\{\qQ_{\alpha}\qD_{\alpha}\}
=\mathrm{Tr}\{\mathrm{diag}(\qQ_{\alpha})\qD_{\alpha}\}\le \lambda_{\max}(\qh_R\qh_R^\dagger-\alpha\qh_E\qh_E^\dagger).\label{Qeigenvec}
}
Equation (\ref{Qeigenvec}) holds with equality if
$\qQ_{\alpha}$ is diagonal and has a unique nonzero entry (equal to one) corresponding to position of the largest entry in $\qD_{\alpha}$.
In other words, $\qQ$ and $\qh_R\qh_R^\dagger-\alpha\qh_E\qh_E^\dagger$
have the same eigenvectors, and $\qQ$ has rank one.
Thus, it holds $\qQ=\qu_{\alpha, \max}\qu_{\alpha, \max}^\dagger$
where $\qu_{\alpha, \max}$ is the eigenvector associated
with the largest eigenvalue of $\qh_R\qh_R^\dagger-\alpha\qh_E\qh_E^\dagger$.
The largest eigenvalue and the associated eigenvector of $\qh_R\qh_R^\dagger-\alpha\qh_E\qh_E^\dagger$
can be expressed in closed form based on Lemma 2.
In our problem, $\qr=\qh_R$, $\qs=\sqrt{\alpha}\,\qh_E$.
By using Lemma \ref{lem:2}, we now can obtain
\equ{
F(\alpha)=1+\frac{\rho\|\qh_R\|^2}{2}-\left(1+\frac{\rho\|\qh_E\|^2}{2}\right)\alpha+\frac{\rho}{2}
\sqrt{(\|\qh_R\|^2+\alpha\|\qh_E\|^2)^2-4\alpha|\qh_R^\dagger\qh_E|^2}.\label{equFAlpha}
}
$F(\alpha)=0$ has a unique root given in closed form:
\equ{
\alpha^\circ=\frac{b+\sqrt{b^2-4ac}}{2a}.\label{optAlpha}
}
The optimal $\qQ$ is given by $\qQ^\circ=\qe_1\qe_1^\dagger$ where $\qe_1$ is defined
in Lemma \ref{lem:2} where $\qr=\qh_R$, $\qs=\sqrt{\alpha^\circ}\,\qh_E$. The secrecy capacity is given by $C_s=\log\alpha^\circ$.

\end{proof}

Based on Theorem \ref{theo:MISO},   if $\qh_R= \xi\qh_E$ and $|\xi|<1$, then $b=a+c$, $a-c>0$ and further
$C_s=0$. This is consistent with the fact that when the legitimate receiver channel
is a degraded version of the eavesdropper
channel the secrecy capacity is zero.
If $\qh_R= \xi\qh_E$ and $|\xi|>1$, then $b=a+c$, $a-c<0$ and
$C_s=\log((1+\rho|\xi|^2\|\qh_E\|^2)/(1+\rho\|\qh_E\|^2))>0$.
This is consistent with the fact that when the eavesdropper channel
is a degraded version of the legitimate receiver
channel the secrecy capacity is positive.
If $\qh_R\ne \xi\qh_E$, then $b>a+c$ and it always holds that $C_s>0$.
Thus, if $\qh_R\ne \xi\qh_E$, the MISO wiretap channel always has positive secrecy capacity
independent of the channel.

To gain more insight into the secrecy capacity, we consider
the rate at which the secrecy capacity scales with $\log\rho$ as in \cite{Liang4}.
If $\qh_R\ne \xi\qh_E$, then under high SNR, it follows from (\ref{CsMISO}) that
\equ{
C_s(\rho)=\log\rho+\log(\|\qh_R\|^2-|\qh_R^\dagger\qh_E|^2/\|\qh_E\|^2+O({1}/{\rho}))
}
where $O(\cdot)$ is the big-O notation.
The secrecy degree of freedom ($s.d.o.f.$) (also see \cite{Liang4}) of the MISO wiretap channel is given by
\equ{
s.d.o.f \triangleq \lim_{\rho\to \infty}\frac{C_s(\rho)}{\log\rho}=1.\label{sdofMISO}
}

\section{Conditions for Optimal Input Covariance Matrix of MIMO Wiretap Channel}\label{sec:4}

In this section, we analyze a general MIMO wiretap channel.
First, we obtain the necessary conditions for the optimal $\qQ$ by using Karush-Kuhn-Tucker (KKT) conditions.
Let us construct the cost function
\equ{
L(\qQ, \theta, {\bs \Psi})=C_s(\qQ)-\theta (\mathrm{Tr}(\qQ)-1)+\mathrm{Tr}({\bs \Psi}\qQ)
}
where $\theta$ is the Lagrange multiplier associated with the constraint $\mathrm{Tr}(\qQ)=1$,
${\bs \Psi}$ is the Lagrange multiplier associated with the constraint $\qQ\succeq 0$.
The KKT conditions enable us to write \cite{Boyd}
\al{
&{\bs \Theta}-\theta\qI_{n_T}+{\bs\Psi}=0,\label{KKTcond1}\\
&{\bs \Psi}\succeq 0, \mathrm{Tr}({\bs \Psi}\qQ)=0, \qQ\succeq 0, \mathrm{Tr}(\qQ)=1,\label{KKTcond2}
}
where
\equ{
{\bs\Theta}=\rho\qH_R^\dagger(\qI_{n_R}+\rho\qH_R\qQ\qH_R^\dagger)^{-1}\qH_R-
\rho\qH_E^\dagger(\qI_{n_E}+\rho\qH_E\qQ\qH_E^\dagger)^{-1}\qH_E.\label{Theta}
}
Here we use the facts: $\frac{\partial }{\partial \qQ}\mathrm{Tr}({\bs\Psi}\qQ)={\bs\Psi}^T$ and
\equ{\label{difflogdet}
\frac{\partial \log\mathrm{det}(\qI_{n_R}+\rho\qH_R\qQ\qH_R^\dagger)}{\partial \qQ}
=[\rho\qH_R^\dagger(\qI_{n_R}+\rho\qH_R\qQ\qH_R^\dagger)^{-1}\qH_R]^T.
}
For future use, we also rewrite (\ref{Theta}) as
\equ{
{\bs\Theta}=\qS_R(\qI_{n_T}+\qQ\qS_R)^{-1}-\qS_E(\qI_{n_T}+\qQ\qS_E)^{-1}\label{Theta2}
}
where $\qS_R=\rho\qH_R^\dagger\qH_R$, $\qS_E=\rho\qH_E^\dagger\qH_E$,
which follows from the fact: $\qH(\qI+\rho\qQ\qH^\dagger\qH)=(\qI+\rho\qH\qQ\qH^\dagger)\qH$ for any matrix $\qH$
and hence $(\qI+\rho\qH\qQ\qH^\dagger)^{-1}\qH=\qH(\qI+\rho\qQ\qH^\dagger\qH)^{-1}$,
$\qH^\dagger(\qI+\rho\qH\qQ\qH^\dagger)^{-1}\qH=\qH^\dagger\qH(\qI+\rho\qQ\qH^\dagger\qH)^{-1}$.

In this paper, ${\bs\Theta}$ is an important variable for the optimal input covariance problem.
It has the following property which will used later.
\begin{property}\label{prop:Theta}
{\it For any $\qQ\succeq 0$, $\lambda_{\max}(\bs\Theta)>0$; For any $\qQ\succeq 0$,
$\mathrm{Tr}(\qQ{\bs \Theta})\le \mathrm{Tr}(\qQ)\lambda_{\max}({\bs\Theta})$,
and in particular, for any $\qQ\in \Omega$,
$\mathrm{Tr}(\qQ{\bs \Theta})\le \lambda_{\max}({\bs\Theta})$.}
\end{property}

The proof is given in Appendix \ref{proofPropTheta}.

From the KKT conditions (\ref{KKTcond1}) and (\ref{KKTcond2}), we obtain the equivalent
(but without containing the Lagrange multipliers) conditions for optimal $\qQ$ consisting of a set of
equations given in the following theorem.
\begin{theorem}\label{theo:1}
{\it The optimal $\qQ\succeq 0$ satisfies
\al{
\qQ{\bs \Theta}&=\mathrm{Tr}(\qQ{\bs \Theta})\qQ\label{basicEq1}\\
\lambda_{\max}({\bs\Theta})&=\mathrm{Tr}(\qQ{\bs \Theta}).\label{basicEq2}
}
}
\end{theorem}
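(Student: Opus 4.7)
The plan is to work directly from the KKT conditions (\ref{KKTcond1})--(\ref{KKTcond2}) and eliminate the Lagrange multipliers $\theta$ and ${\bs\Psi}$. First I would rewrite (\ref{KKTcond1}) as ${\bs\Psi} = \theta\qI_{n_T} - {\bs\Theta}$. Since ${\bs\Psi}\succeq 0$, this immediately gives the eigenvalue bound $\theta \geq \lambda_{\max}({\bs\Theta})$, which will be used at the end.

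Next I would exploit complementary slackness $\mathrm{Tr}({\bs\Psi}\qQ)=0$. The standard trick for two Hermitian positive semi-definite matrices is to write ${\bs\Psi}=\qB^\dagger\qB$ and note $\mathrm{Tr}({\bs\Psi}\qQ)=\mathrm{Tr}(\qB\qQ^{1/2}(\qB\qQ^{1/2})^\dagger)=\|\qB\qQ^{1/2}\|^2$, so the condition forces $\qB\qQ^{1/2}=0$ and hence ${\bs\Psi}\qQ=0$. Substituting ${\bs\Psi} = \theta\qI_{n_T} - {\bs\Theta}$ yields ${\bs\Theta}\qQ = \theta\qQ$, and taking the conjugate transpose (both ${\bs\Theta}$ and $\qQ$ are Hermitian) gives $\qQ{\bs\Theta}=\theta\qQ$.

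Now I would take the trace of $\qQ{\bs\Theta}=\theta\qQ$ and use $\mathrm{Tr}(\qQ)=1$ to obtain $\theta=\mathrm{Tr}(\qQ{\bs\Theta})$. Plugging this back into $\qQ{\bs\Theta}=\theta\qQ$ produces equation (\ref{basicEq1}). Finally, combining $\theta=\mathrm{Tr}(\qQ{\bs\Theta})$ with the earlier inequality $\theta\geq\lambda_{\max}({\bs\Theta})$ and with the reverse inequality $\mathrm{Tr}(\qQ{\bs\Theta})\leq\lambda_{\max}({\bs\Theta})$ supplied by Property \ref{prop:Theta} for $\qQ\in\Omega$ yields the chain
\begin{equation}
\lambda_{\max}({\bs\Theta}) \leq \theta = \mathrm{Tr}(\qQ{\bs\Theta}) \leq \lambda_{\max}({\bs\Theta}),
\end{equation}
forcing equality throughout and giving (\ref{basicEq2}).

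The only step that requires any care is the passage from the scalar slackness $\mathrm{Tr}({\bs\Psi}\qQ)=0$ to the matrix identity ${\bs\Psi}\qQ=0$; everything else is linear-algebraic bookkeeping. Note also that the theorem is stated without explicitly requiring $\qQ\in\Omega$ beyond $\qQ\succeq 0$, but the trace-one normalization enters through (\ref{KKTcond2}) and is used both in identifying $\theta$ with $\mathrm{Tr}(\qQ{\bs\Theta})$ and in invoking Property \ref{prop:Theta}.
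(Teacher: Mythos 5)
Your proof is correct and follows essentially the same route as the paper: eliminate the multipliers by deducing ${\bs\Psi}\qQ=0$ from $\mathrm{Tr}({\bs\Psi}\qQ)=0$ with both matrices positive semi-definite, substitute ${\bs\Psi}=\theta\qI_{n_T}-{\bs\Theta}$ to get $\qQ{\bs\Theta}=\theta\qQ$, and identify $\theta=\mathrm{Tr}(\qQ{\bs\Theta})$ by taking traces with $\mathrm{Tr}(\qQ)=1$. The only cosmetic difference is the finish for (\ref{basicEq2}): the paper observes that the eigenvalues of ${\bs\Theta}$ associated with the positive eigenvalues of $\qQ$ all equal $\theta$ while the remaining ones are at most $\theta$, whereas you sandwich $\theta$ between $\lambda_{\max}({\bs\Theta})$ (from ${\bs\Psi}\succeq 0$) and the bound $\mathrm{Tr}(\qQ{\bs\Theta})\le\lambda_{\max}({\bs\Theta})$ of Property \ref{prop:Theta} --- both are valid one-line arguments.
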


Please see Appendix \ref{proofTheo1} for details.

Equations (\ref{basicEq1}) and (\ref{basicEq2}) provide two elementary conditions that characterize the optimal $\qQ$.
At this point we do not have a proof that any $\qQ$ satisfying the conditions of Theorem \ref{theo:1}
is the optimal input covariance.
However, for some special cases, e.g., the MISO wiretap channel analyzed in \S\ref{sec:3},
this is true. In particular, for this case we provide the following theorem.
\begin{theorem}\label{theo:MISOopt}
{\it For MISO wiretap channel, any $\qQ$ satisfying the conditions of Theorem \ref{theo:1} is the optimal input covariance.}
\end{theorem}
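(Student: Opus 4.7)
The plan is to show that any feasible $\qQ$ satisfying the KKT conditions of Theorem~\ref{theo:1} in the MISO setting must be rank one and must yield the value $\alpha = e^{C_s(\qQ)}$ that solves the scalar equation $F(\alpha)=0$ from the proof of Theorem~\ref{theo:MISO}. Since $F$ has the unique positive root $\alpha^\circ$ in (\ref{optAlpha}), this forces $C_s(\qQ)=\log\alpha^\circ=C_s$, proving optimality.

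I would begin with the rank-one reduction. Combining (\ref{basicEq1}) and (\ref{basicEq2}) (and using Hermitian symmetry of both $\qQ$ and $\bs\Theta$) gives $\bs\Theta\qQ=\lambda_{\max}(\bs\Theta)\qQ$, so every column of $\qQ$ lies in the top eigenspace of $\bs\Theta$. In the MISO case, the expression (\ref{Theta}) reduces to $\bs\Theta=a\,\qh_R\qh_R^\dagger-b\,\qh_E\qh_E^\dagger$ with positive scalars $a=\rho/(1+\rho\qh_R^\dagger\qQ\qh_R)$ and $b=\rho/(1+\rho\qh_E^\dagger\qQ\qh_E)$. By Lemma~\ref{lem:2} such a difference of two rank-one Hermitian terms has at most two nonzero eigenvalues, and whenever both are nonzero they have strictly opposite signs. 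Combined with Property~\ref{prop:Theta}, which guarantees $\lambda_{\max}(\bs\Theta)>0$, this shows that the top eigenvalue is simple, so $\mathrm{rank}(\qQ)\le 1$; together with $\mathrm{Tr}(\qQ)=1$ this forces $\qQ=\qq\qq^\dagger$ for a unit vector $\qq$ that is itself a top eigenvector of $\bs\Theta$.

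Next I would convert the KKT point into a scalar identity for $\alpha$. With $\alpha:=e^{C_s(\qQ)}=(1+\rho|\qh_R^\dagger\qq|^2)/(1+\rho|\qh_E^\dagger\qq|^2)$, a direct check gives the factorization $\bs\Theta=a\bigl(\qh_R\qh_R^\dagger-\alpha\,\qh_E\qh_E^\dagger\bigr)$, so $\qq$ is also a top eigenvector of $\qh_R\qh_R^\dagger-\alpha\,\qh_E\qh_E^\dagger$. Taking the Rayleigh quotient in this basis yields
\[
|\qh_R^\dagger\qq|^2-\alpha|\qh_E^\dagger\qq|^2=\lambda_{\max}\bigl(\qh_R\qh_R^\dagger-\alpha\,\qh_E\qh_E^\dagger\bigr),
\]
whose left-hand side equals $(\alpha-1)/\rho$ by the very definition of $\alpha$, while the right-hand side is the closed-form eigenvalue supplied by Lemma~\ref{lem:2} that defines $F(\alpha)$ in (\ref{equFAlpha}). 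Hence $F(\alpha)=0$; uniqueness of its positive root $\alpha^\circ$, already established in Theorem~\ref{theo:MISO}, then yields $\alpha=\alpha^\circ$ and $C_s(\qQ)=\log\alpha^\circ$, as required.

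The delicate step is the rank-one reduction. A generic rank-two indefinite Hermitian matrix can have a double top eigenvalue, in which case the equation $\bs\Theta\qQ=\lambda_{\max}(\bs\Theta)\qQ$ alone does not pin down the rank of $\qQ$; what rules this out in the MISO case is precisely the ``rank one minus rank one'' structure of $\bs\Theta$ together with the sign information provided by Lemma~\ref{lem:2}. Once rank one is secured, the second half is a routine algebraic substitution followed by the uniqueness argument already contained in the proof of Theorem~\ref{theo:MISO}.
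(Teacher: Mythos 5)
Your proposal is correct and follows essentially the same route as the paper: deduce rank one from (\ref{basicEq1}) together with the one-positive/one-negative eigenvalue structure of $\bs\Theta$ given by Lemma \ref{lem:2} and Property \ref{prop:Theta}, then show that condition (\ref{basicEq2}) is exactly $F(\alpha)=0$ with $\alpha=\omega_1/\omega_2=e^{C_s(\qQ)}$, and invoke the uniqueness of the root $\alpha^\circ$ and its correspondence with the maximizer from Theorem \ref{theo:MISO}. Your factorization $\bs\Theta=a(\qh_R\qh_R^\dagger-\alpha\qh_E\qh_E^\dagger)$ plus the Rayleigh-quotient identity is just a slightly cleaner way of carrying out the same algebra the paper does when it equates $\lambda_{\max}(\bs\Theta)$ with $\mathrm{Tr}(\qQ\bs\Theta)$ to reach (\ref{equomega1omega2}).
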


The proof is given in Appendix \ref{proofTheoMISOopt}.

Now we proceed. From Property \ref{prop:Theta} and (\ref{basicEq2}), we know that the optimal $\qQ$ satisfies
\equ{
\mathrm{Tr}(\qQ{\bs \Theta})>0\label{basicEq3}.
}
Based on (\ref{basicEq1}) and (\ref{basicEq3}), and
by taking trace operation over both side of (\ref{basicEq1}), it can be easily seen that
$\mathrm{Tr}(\qQ)=1$. That is to say, equations
(\ref{basicEq1}) and (\ref{basicEq2}) imply $\mathrm{Tr}(\qQ{\bs \Theta})>0$ and $\mathrm{Tr}(\qQ)=1$.

The condition (\ref{basicEq1}) reveals that
the optimal $\qQ$ satisfies that $\qQ$ and ${\bs \Theta}$ commute and have the same eigenvectors \cite[p.239]{Davis}.
The condition (\ref{basicEq2}) means that
the eigenvalues of ${\bs\Theta}$ corresponding to the positive eigenvalues of $\qQ$ are all equal to
 $\mathrm{Tr}(\qQ{\bs \Theta})$, while the remaining eigenvalues of ${\bs\Theta}$ (i.e., corresponds to the zero eigenvalues of $\qQ$)
are all less than  or equal to $\mathrm{Tr}(\qQ{\bs \Theta})$. Obviously, if the optimal $\qQ$ has full rank,
then ${\bs \Theta}=\theta\qI_{n_T}$ for a certain $\theta>0$.

It can be shown that based on the conditions of  Theorem \ref{theo:1}, the optimal ${\bf Q}$ has the following properties.
\begin{property}\label{prop:1}
{\it The optimal $\qQ$ satisfies:
\begin{enumerate}
    \item [(i)] $\mathrm{rank}(\bs\Theta)=\mathrm{rank}(\qS_R-\qS_E)\ge \mathrm{rank}(\qQ)$;
    \item [(ii)] $\qQ(\qS_R-\qS_E)\qQ \succeq 0$;
    \item [(iii)] $\qQ+\qQ\qS_E\qQ$ and $\qQ+\qQ\qS_R\qQ$ commute and have the same eigenvectors.
\end{enumerate}
}
\end{property}

For readability, we put the proof of Property \ref{prop:1} in Appendix \ref{proofProp1}.
A direct result of Property \ref{prop:1} is the following:
\begin{property}\label{prop:2}
{\it when $\qS_R-\qS_E\nsucc 0$, the optimal $\qQ$ has low rank.}
\end{property}
The proof is simple. When $\qS_R-\qS_E$ is indefinite, if the optimal $\qQ$ has full rank,
then Property \ref{prop:1} (ii) leads to $\qS_R-\qS_E\succeq 0$,
which violates that $\qS_R-\qS_E$ is indefinite.
When $\qS_R-\qS_E\succeq 0$ but $\qS_R-\qS_E\nsucc 0$, it follows from Property \ref{prop:1} (i) that the optimal $\qQ$ has low rank.
This result was also pointed out in \cite{Hassibi2}. When $\qS_R-\qS_E\succ 0$,
the optimal $\qQ$ may have low rank or full rank.

Before ending this section, we point out that we can
combine the elementary conditions (\ref{basicEq1}) and (\ref{basicEq2}) into a single equation.
When ${\bs \Theta}$ and $\qQ$ commute and have the same eigenvectors,
${\bs \Theta}+\gamma \qI_{n_T}$ and $\qQ$ commute
and have the same eigenvectors for any real number $\gamma$, and vice versa.
We can find a certain $\gamma$ such that ${\bs \Theta}+\gamma \qI_{n_T}\succ 0$ for any $\qQ\succeq 0$.
Based on $\bs\Theta$ in (\ref{Theta2}), we have
\equ{
\lambda_{\min}({\bs\Theta})> -\lambda_{\max}(\qS_E).
}
When $\gamma \ge \lambda_{\max}(\qS_E)$, it  always holds that ${\bs \Theta}+\gamma \qI_{n_T}\succ 0$.
Let $\qK={\bs \Theta}+\gamma \qI_{n_T}$ and hence $\qK \succ 0$, $\mathrm{Tr}(\qQ\qK)>0$ for any $\qQ\succeq 0$ but $\qQ\ne 0$.
Equations (\ref{basicEq1}) and (\ref{basicEq2})
are equivalent to
\al{
\qQ\qK&=\mathrm{Tr}(\qQ\qK)\qQ, \label{algoBasicEq1}\\
\lambda_{\max}(\qK)&=\mathrm{Tr}(\qQ\qK).\label{algoBasicEq2}
}
We can combine the above two equations to a single one as follows.
\equ{
\qQ\qK=\frac{1}{2}\left(\mathrm{Tr}(\qQ)+\frac{1}{\mathrm{Tr}(\qQ)}\right)\lambda_{\max}(\qK)\qQ\label{singleEq}
}
or equivalently,
\equ{
\qQ\wt{\qK}=\frac{1}{2}\left(\mathrm{Tr}(\qQ)+\frac{1}{\mathrm{Tr}(\qQ)}\right)\qQ\label{singleEq2}
}
where $\wt{\qK}=\qK/\lambda_{\max}(\qK)$.
We give the following theorem.
\begin{theorem}\label{theo:singleEq}
{\it Any $\qQ\succeq 0$ that satisfies (\ref{singleEq}) also satisfies the conditions of Theorem \ref{theo:1}.}
\end{theorem}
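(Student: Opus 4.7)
The plan is to show that (\ref{singleEq}) already forces $\mathrm{Tr}(\qQ)=1$, after which (\ref{singleEq}) collapses into precisely the pair (\ref{algoBasicEq1})--(\ref{algoBasicEq2}) that was observed just before the theorem to be equivalent to the conditions (\ref{basicEq1})--(\ref{basicEq2}) of Theorem \ref{theo:1}. First I would note that any $\qQ\succeq 0$ satisfying (\ref{singleEq}) must be nonzero, since the factor $1/\mathrm{Tr}(\qQ)$ on the right-hand side is only defined when $\mathrm{Tr}(\qQ)>0$; so set $t=\mathrm{Tr}(\qQ)>0$.

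The decisive step is a trace-then-compare argument. Taking the trace of both sides of (\ref{singleEq}) and simplifying yields
\[
\mathrm{Tr}(\qQ\qK) \;=\; \tfrac{1}{2}\bigl(t^{2}+1\bigr)\,\lambda_{\max}(\qK).
\]
I would then invoke the trace bound $\mathrm{Tr}(\qQ\qK)\le \mathrm{Tr}(\qQ)\,\lambda_{\max}(\qK)=t\,\lambda_{\max}(\qK)$, which holds because $\qK\succ 0$ and $\qQ\succeq 0$ (this is exactly the reasoning used for Property \ref{prop:Theta}, now applied to $\qK$ in place of $\bs\Theta$). Since $\lambda_{\max}(\qK)>0$, combining the identity and the bound gives $\tfrac{1}{2}(t^{2}+1)\le t$, i.e.\ $(t-1)^{2}\le 0$, so $t=1$.

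With $t=1$ the scalar prefactor $\tfrac{1}{2}(t+1/t)$ in (\ref{singleEq}) becomes $1$, and (\ref{singleEq}) reduces to $\qQ\qK=\lambda_{\max}(\qK)\qQ$, which is (\ref{algoBasicEq1}); taking the trace of this identity and using $\mathrm{Tr}(\qQ)=1$ gives $\mathrm{Tr}(\qQ\qK)=\lambda_{\max}(\qK)$, which is (\ref{algoBasicEq2}). The equivalence of (\ref{algoBasicEq1})--(\ref{algoBasicEq2}) with (\ref{basicEq1})--(\ref{basicEq2}) noted in the paragraph preceding the theorem then delivers the conditions of Theorem \ref{theo:1}. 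I do not expect any real obstacle here; the entire argument rests on the elementary inequality $(t+1/t)/2\ge 1$, which attains equality only at $t=1$, combined with the fact that $\gamma\ge\lambda_{\max}(\qS_E)$ was chosen precisely so that $\qK\succ 0$ and the trace bound is available in the required direction.
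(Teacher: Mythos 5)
Your proof is correct and follows essentially the same route as the paper: both arguments exploit the inequality $\tfrac{1}{2}(\mathrm{Tr}(\qQ)+1/\mathrm{Tr}(\qQ))\ge 1$, with equality only at $\mathrm{Tr}(\qQ)=1$, together with the maximality of $\lambda_{\max}(\qK)$ (and $\qK\succ 0$), and then read off (\ref{algoBasicEq1})--(\ref{algoBasicEq2}). The only cosmetic difference is how $\mathrm{Tr}(\qQ)=1$ is extracted: you trace both sides and apply the bound $\mathrm{Tr}(\qQ\qK)\le \mathrm{Tr}(\qQ)\lambda_{\max}(\qK)$ to get $(\mathrm{Tr}(\qQ)-1)^2\le 0$, whereas the paper observes that the scalar prefactor must itself be an eigenvalue of $\qK$ and hence cannot exceed $\lambda_{\max}(\qK)$ --- both steps are sound and lead to the identical conclusion.
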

\begin{proof}
It is easy to verify that $\frac{1}{2}(\mathrm{Tr}(\qQ)+1/\mathrm{Tr}(\qQ))\lambda_{\max}(\qK)$
is an eigenvalue of $\qK$. But $\lambda_{\max}(\qK)$ is the largest eigenvalue of $\qK$.
On the other hand, $\frac{1}{2}(\mathrm{Tr}(\qQ)+1/\mathrm{Tr}(\qQ))\ge 1$ holds with equality
if and only if $\mathrm{Tr}(\qQ)=1$. Thus, we know $\mathrm{Tr}(\qQ)=1$. With this, taking trace operation over
both sides of (\ref{singleEq}) leads to $\lambda_{\max}(\qK)=\mathrm{Tr}(\qQ\qK)$.
\end{proof}

Summarily, we can alternatively do one of the following two things to find $\qQ$ that satisfies the conditions of Theorem \ref{theo:1}:
\begin{enumerate}
    \item [(i)] Find $\qQ\succeq 0$ satisfies (\ref{algoBasicEq1}) and (\ref{algoBasicEq2});
    \item [(ii)] Find $\qQ\succeq 0$ satisfies (\ref{singleEq});
\end{enumerate}
We will discuss the algorithm to search for such $\qQ$ in \S\ref{algori}.

In the following sections, we will analyze some special cases.
In particular, for $n_T=2$ we obtain the optimal $\qQ$ in closed form.
If $\qH_R^\dagger\qH_R-\qH_E^\dagger\qH_E$ has all negative eigenvalues except
a positive eigenvalue, we show that the optimal $\qQ$ has rank one and can also can be expressed in a closed form.
For general cases, we prove that the optimal $\qQ$ is a fixed point of a mapping from a convex set to itself,
and propose an algorithm to search for it.

\section{The Case $n_T=2$}\label{nT2}

In this section, we analyze the case $n_T=2$, i.e., the transmitter has two antennas.
It includes the four cases $(n_T, n_R, n_E)=(2,2,2), (2,2,1), (2,1,2), (2,1,1)$.
In \S\ref{sec:3}, the MISO wiretap channel with $n_T=2$ belongs to $(n_T, n_R, n_E)=(2,1,1)$.
In \cite{Shafiee}, the case $(n_T, n_R, n_E)=(2,2,1)$ is analyzed.
We derive the optimal $\qQ$ in two subsections in which $\qS_R-\qS_E\nsucc 0$ or $\qS_R-\qS_E\succ 0$.
We also analyze the rank of optimal $\qQ$ with respect to SNR.

\subsection{$\qS_R-\qS_E\nsucc 0$}\label{nT2indef}

According to Property \ref{prop:2}, the optimal $\qQ$ has low rank (rank one)
and hence it has the form $\qQ=\qu\qu^\dagger$ where $\qu$ is a unit-norm vector to be determined.
We can rewrite
\equ{
C_s(\qQ)=\log\frac{1+\qu^\dagger\qS_R\qu}{1+\qu^\dagger\qS_E\qu}
=\log\frac{\qu^\dagger(\qI_2+\qS_R)\qu}{\qu^\dagger(\qI_2+\qS_E)\qu}.\label{optQnT2indef}
}
The optimal $\qQ$ is easily obtained to be $\qQ^\circ=\qu^\circ{\qu^\circ}^\dagger$
where $\qu^\circ$ is the eigenvector associated with the largest eigenvalue of $(\qI_2+\qS_E)^{-1}(\qI_2+\qS_R)$.
The secrecy capacity is given by
\equ{
C_s=\log\left(\lambda_{\max}\{(\qI_2+\qS_E)^{-1}(\qI_2+\qS_R)\}\right).
}
We can express $C_s$ in closed form. Denote
\equ{
\qH_E^\dagger\qH_E=\left(%
\begin{array}{cc}
  a_1 & b_1 \\
  b_1^\ast & c_1 \\
\end{array}%
\right), \qH_R^\dagger\qH_R=\left(%
\begin{array}{cc}
  a_2 & b_2 \\
  b_2^\ast & c_2 \\
\end{array}%
\right).\label{S1S2}
}
By using the fact: for any $2\times 2$ matrix $\qA$ with two real eigenvalues, the largest eigenvalue is given by
$\lambda_{\max}(\qA)=[\mathrm{Tr}(\qA)+\sqrt{(\mathrm{Tr}(\qA))^2-4\det(\qA)}\,]/2$, and the matrix inverse formula
\equ{
\left(%
\begin{array}{cc}
  a_{11} & a_{12} \\
  a_{12}^\ast & a_{22} \\
\end{array}%
\right)^{-1}=\frac{1}{a_{11}a_{22}-|a_{12}|^2} \left(%
\begin{array}{cc}
  a_{22} & -a_{12} \\
  -a_{12}^\ast & a_{11} \\
\end{array}%
\right),\label{matrixinv}
}
we can obtain
\equ{
C_s=\log\frac{A+\sqrt{A^2-4(B_1-B_2)}}{2[(1/\rho+a_1)(1/\rho+c_1)-|b_1|^2]}\label{CsClosedFrom}
}
where $A=a_1c_2+a_2c_1-b_1b_2^\ast-b_1^\ast b_2+(a_1+a_2+c_1+c_2)/\rho+2/\rho^2$,
$B_1=(a_2c_1-b_1b_2^\ast+(a_2+c_1)/\rho+1/\rho^2)(a_1c_2-b_1^\ast b_2+(a_1+c_2)/\rho+1/\rho^2)$ and
$B_2=(b_2c_1-b_1c_2+(b_2-b_1)/\rho)(a_1b_2^\ast-a_2b_1^\ast+(b_2^\ast-b_1^\ast)/\rho)$.

Now we analyze the secrecy degree of freedom (s.d.o.f.) defined in (\ref{sdofMISO}) which is different whether $\qS_E$ has full rank or not.
\begin{itemize}

\item \underline{Case 1) $\qS_E$ has rank two (full rank)}

In this case, noting that
$(\qI_2+\qS_E)^{-1}(\qI_2+\qS_R)=(\qI_2/\rho+\qH_E^\dagger\qH_E)^{-1}(\qI_2/\rho+\qH_R^\dagger\qH_R)$,
we have $C_s\to \log(\lambda_{\max}\{(\qH_E^\dagger\qH_E)^{-1}\qH_R^\dagger\qH_R\})$ as $\rho\to \infty$.
Thus, we get
\equ{
s.d.o.f = \lim_{\rho\to \infty}\frac{C_s(\rho)}{\log\rho} = 0.
}

\item \underline{Case 2) $\qS_E$ has rank one (low rank)}

In this case, $\qH_E^\dagger\qH_E$ is singular, hence can be expressed as $\qH_E^\dagger\qH_E=\qv_2\qv_2^\dagger$.
By using (\ref{matrixinverse}), we can write
$(\qI_2+\rho\qS_2)^{-1}(\qI_2+\rho\qS_1)=\rho(\qI_2-\rho\qv_2\qv_2^\dagger/(1+\rho\|\qv_2\|^2))(\qI_2/\rho+\qS_1)
\to \rho(\qI_2-\qS_2/\mathrm{Tr}(\qS_2))\qS_1$ as $\rho\to \infty$.
Thus, as $\rho\to \infty$, $\lambda_{\max}((\qI_2+\rho\qS_2)^{-1}(\qI_2+\rho\qS_1))
\to \rho \lambda_{\max}((\qI_2-\qS_2/\mathrm{Tr}(\qS_2))\qS_1)$.
Thus, we get
\equ{
s.d.o.f = \lim_{\rho\to \infty}\frac{C_s(\rho)}{\log\rho} = 1.
}
We can also use (\ref{CsClosedFrom}) to obtain the same result.

\end{itemize}

\subsection{$\qS_R-\qS_E\succ 0$}

In this case, the  optimal $\qQ$ may have full rank or low rank.
If the optimal $\qQ$ has low rank, it is given in (\ref{optQnT2indef}). Therefore, in the following we focus on the
case in which the optimal $\qQ$ has full rank.
The optimal $\qQ$ can be determined from the above two cases.

Since $\qQ\succ 0$, it follows from (\ref{basicEq1})
that ${\bs\Theta}$ must be a positive scalar multiplication of $\qI_2$.
Recall from (\ref{Theta2}) that ${\bs\Theta}=\qS_R(\qI_2+\qQ\qS_R)^{-1}-\qS_E(\qI_2+\qQ\qS_E)^{-1}$.
We know $\qS_R\succ 0$, but $\qS_E$ is not necessarily positive definite.
Thus, in the following, we discuss two cases respectively:
a) $\qS_E$ has rank two (full rank); b) $\qS_E$ has rank one (low rank).

\begin{itemize}
\item \underline{Case a) $\qS_E$ has rank two (full rank)} In this case, $\qS_R\succ 0$, $\qS_E \succ 0$. We can rewrite
\equ{
{\bs\Theta}=(\qS_R^{-1}+\qQ)^{-1}-(\qS_E^{-1}+\qQ)^{-1}=\theta\qI_2, \theta>0.\label{Qpositi}
}
Based on the eigen-decomposition $(\qS_E^{-1}-\qS_R^{-1})/2=\qU_1\qD_1\qU_1^\dagger$
where $\qD_1=\mathrm{diag}(d_1,d_2)$, $d_1\ge d_2\ge0$ but $d_1^2+d_2^2\ne 0$ (otherwise, $\qS_R=\qS_E$ violates the assumption $\qS_R\ne\qS_E$),
and letting $\qC=\qU_1^\dagger(\qS_E^{-1}+\qS_R^{-1})\qU_1/2$,
we  get $\qS_E^{-1}=\qU_1(\qC+\qD_1)\qU_1^\dagger$ and $\qS_R^{-1}=\qU_1(\qC-\qD_1)\qU_1^\dagger$. On inserting the latter expressions in
(\ref{Qpositi}) we get
\equ{
(\hat{\qQ}+\qC-\qD_1)^{-1}-(\hat{\qQ}+\qC+\qD_1)^{-1}=\theta\qI_2, \theta>0\label{Qpositi1}
}
where $\hat{\qQ}=\qU_1^\dagger\qQ\qU_1$. Note that $\qD_1$ is diagonal. We can actually show that $\hat{\qQ}+\qC$ must be diagonal.
To prove this, let us denote the $(1,2)$th entry of $\hat{\qQ}+\qC$ by $\bar{q}_{12}$.
We know that the $(1,2)$th entry of $(\hat{\qQ}+\qC-\qD_1)^{-1}-(\hat{\qQ}+\qC+\qD_1)^{-1}$
equals
\equ{
-\bar{q}_{12}[\mathrm{det}(\hat{\qQ}+\qC-\qD_1)^{-1}-\mathrm{det}(\hat{\qQ}+\qC+\qD_1)^{-1}]=0
}
which leads to $\bar{q}_{12}=0$. Here we used (\ref{matrixinv}) and
the fact $\det(\qA)>\det(\qB)$ for $\qA\succ \qB$, $\qB\succ 0$.
Since $\hat{\qQ}+\qC$ is diagonal, we denote $\hat{\qQ}+\qC=\mathrm{diag}(\bar{q}_1,\bar{q}_2)$ and
$\hat{\qQ}=\mathrm{diag}(\bar{q}_1,\bar{q}_2)-\qC$. It follows from $\hat{\qQ}\succ 0$, $\mathrm{Tr}(\hat{\qQ})=1$
that $\bar{q}_1+\bar{q}_2=1+\mathrm{Tr}(\qC)$, $\mathrm{diag}(\bar{q}_1,\bar{q}_2)\succ\qC$.
Combining these with (\ref{Qpositi1}) results in
\equ{
(d_1/d_2-1)\bar{q}_2^2+2(1+\mathrm{Tr}(\qC))\bar{q}_2+d_1(d_1-d_2)-(1+\mathrm{Tr}(\qC))^2=0.\label{quadequ}
}
We can solve $\bar{q}_2$ from the quadratic equation (\ref{quadequ}) and $\bar{q}_1=1+\mathrm{Tr}(\qC)-\bar{q}_2$.
If $\mathrm{diag}(\bar{q}_1,\bar{q}_2)\succ\qC$ holds, then
$\qQ=\qU_1\hat{\qQ}\qU_1^\dagger$ is a possible solution.
If the equation (\ref{quadequ}) has no positive roots or $\mathrm{diag}(\bar{q}_1,\bar{q}_2)\nsucc\qC$, it means the optimal
$\qQ$ has low rank.

\item \underline{Case b) $\qS_E$ has rank one (low rank)}
In this case $\qS_E$ can be expressed as $\qS_E=\qv_e\qv_e^\dagger$.
We eigen-decompose $\qS_E=\qU_e\mathrm{diag}(\lambda_e,0)\qU_e^\dagger$.
Similarly, we get
\equ{
{\bs\Theta}=(\qS_R^{-1}+\qQ)^{-1}-\qS_E(\qI_2+\qQ\qS_E)^{-1}=\theta\qI_2, \theta>0.\label{Qpositi2}
}
Let us define $\breve{\qQ}=\qU_e^\dagger\qQ\qU_e$, $\breve{\qS}_R=\qU_e^\dagger\qS_R^{-1}\qU_e$.
Inserting $\breve{\qQ}$ and $\breve{\qS}_R$ into (\ref{Qpositi2})
results in
\equ{
(\breve{\qS}_R+\breve{\qQ})^{-1}-\frac{\lambda_e}{1+\breve{q}_{11}\lambda_e}\left(%
\begin{array}{cc}
  1 & 0 \\
  0 & 0 \\
\end{array}%
\right)=\theta\qI_2, \theta>0.\label{Qpositi3}
}
where $\breve{q}_{11}$ is the $(1,1)$th entry of $\breve{\qQ}$. It follows from (\ref{Qpositi3}) that $\breve{\qS}_R+\breve{\qQ}$
is diagonal. Thus, we denote $\breve{\qS}_R+\breve{\qQ}=\mathrm{diag}(\tilde{q}_1,\tilde{q}_2)$
and $\breve{\qQ}=\mathrm{diag}(\tilde{q}_1,\tilde{q}_2)-\breve{\qS}_R$. It follows from $\breve{\qQ}\succ 0$, $\mathrm{Tr}(\breve{\qQ})=1$
that $\tilde{q}_1+\tilde{q}_2=1+\mathrm{Tr}(\breve{\qS}_R)$ and $\mathrm{diag}(\tilde{q}_1,\tilde{q}_2)\succ \breve{\qS}_R$.
Combining the above and (\ref{Qpositi3}) results in
\equ{
\lambda_e\tilde{q}_1^2+2(1-(\breve{\qS}_R)_{11}\lambda_e)\tilde{q}_1-(1-
(\breve{\qS}_R)_{11}\lambda_e)(1+\mathrm{Tr}(\breve{\qS}_R))=0\label{quadeq2}
}
where $(\breve{\qS}_R)_{11}$ is the $(1,1)$th entry of $\breve{\qS}_R$.
We can solve $\tilde{q}_1$ from the quadratic equation (\ref{quadeq2}), and then get
$\tilde{q}_2=1+\mathrm{Tr}(\breve{\qS}_R)-\tilde{q}_1$.
If $\mathrm{diag}(\tilde{q}_1,\tilde{q}_2)\succ \breve{\qS}_R$ holds, then
$\qQ=\qU_e\breve{\qQ}\qU_e^\dagger$ is a possible solution.
If the equation (\ref{quadeq2}) has no positive roots or $\mathrm{diag}(\tilde{q}_1,\tilde{q}_2)\nsucc \breve{\qS}_R$, it means the optimal
$\qQ$ has low rank.
\end{itemize}

\subsection{Rank of Optimal $\qQ$}

For the non-wiretap MIMO channel the rank
of optimal input covariance has a non-decreasing property with respect to SNR \cite{Raghavan}.
In this section we consider the behavior of the rank of optimal input covariance of the MIMO wiretap channel
with respect to SNR.

When $\qS_R-\qS_E\nsucc 0$, according to the result in \S\ref{nT2indef},
the optimal $\qQ$ has rank one, independent of SNR, and hence follows the non-decreasing property of rank.
Next we focus on $\qS_R-\qS_E\succ 0$.
According to \S\ref{nT2indef}, if the optimal $\qQ$ has rank one, it can be expressed as
$\qQ=\qu_0\qu_0^\dagger$ where $\qu_0$ is the eigenvector associated with the largest eigenvalue $\lambda_0$ of
$(\qI_2+\rho\qS_2)^{-1}(\qI_2+\rho\qS_1)$ where $\qS_1=\qH_R^\dagger\qH_R$, $\qS_2=\qH_E^\dagger\qH_E$.
Denote $\qS_0=\qS_1(\qI_2+\rho\qu_0\qu_0^\dagger\qS_1)^{-1}-
\qS_2(\qI_2+\rho\qu_0\qu_0^\dagger\qS_2)^{-1}$. Since $\qS_1-\qS_2\succ 0$, we can rewrite
$\qS_0=(\qI_2+\rho\qS_1\qu_0\qu_0^\dagger)^{-1}(\qS_1-\qS_2)(\qI_2+\rho\qu_0\qu_0^\dagger\qS_2)^{-1}$
and state that $\qS_0$ has full rank (rank two).
Denote
\equ{
g(\rho)\triangleq 2\qu_0^\dagger\qS_0\qu_0-\mathrm{Tr}(\qS_0)
=\frac{\qu_0^\dagger(2\qS_1+\rho\qS_1^2)\qu_0}{\qu_0^\dagger(\qI_{n_T}+\rho\qS_1)\qu_0}
-\frac{\qu_0^\dagger(2\qS_2+\rho\qS_2^2)\qu_0}{\qu_0^\dagger(\qI_{n_T}+\rho\qS_2)\qu_0}-\mathrm{Tr}(\qS_1-\qS_2)
}
where the matrix inverse formula (\ref{matrixinverse}) is used.
We give the following result.
\begin{lemma}\label{lem:rankQ}
{\it If $g(\rho)<0$, then the optimal $\qQ$ has rank two;
If the optimal $\qQ$ has rank one, there must be $g(\rho)\ge 0$.
}
\end{lemma}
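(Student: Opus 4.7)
The plan is to notice that the two parts of the lemma are logical contrapositives of each other (for $n_T=2$ every feasible $\qQ$ has rank one or rank two, so "rank two" and "rank one" are complementary), so it suffices to establish part (ii): if the optimal $\qQ$ has rank one, then $g(\rho)\ge 0$. From the rank-one reduction in \S\ref{nT2indef} (cf.\ (\ref{optQnT2indef})), the maximizer of $C_s$ over rank-one feasible matrices is $\qu_0\qu_0^\dagger$ with $\qu_0$ the top generalized eigenvector of the pencil $(\qI_2+\rho\qS_1,\qI_2+\rho\qS_2)$, so whenever the global optimum has rank one it must equal this $\qu_0\qu_0^\dagger$. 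Substituting $\qQ=\qu_0\qu_0^\dagger$ into (\ref{Theta2}) yields $\bs\Theta=\rho\qS_0$, which is Hermitian. Using $\qA(\qI+\qB\qA)^{-1}=(\qI+\qA\qB)^{-1}\qA$, a brief algebraic check confirms $\qS_0=(\qI_2+\rho\qS_1\qu_0\qu_0^\dagger)^{-1}(\qS_1-\qS_2)(\qI_2+\rho\qu_0\qu_0^\dagger\qS_2)^{-1}$, and since $\qS_1-\qS_2\succ 0$ has full rank while the outer factors are invertible (each is $\qI$ plus a rank-one matrix with positive determinant), $\qS_0$ is Hermitian of rank two.

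Next I would invoke Theorem \ref{theo:1}. Condition (\ref{basicEq1}) at $\qQ=\qu_0\qu_0^\dagger$ forces $\qS_0\qu_0=(\qu_0^\dagger\qS_0\qu_0)\qu_0$, so $\qu_0$ is an eigenvector of $\qS_0$, and condition (\ref{basicEq2}) identifies the associated eigenvalue with $\lambda_{\max}(\qS_0)$. Since $\qS_0$ is $2\times 2$ Hermitian of rank two, its other eigenvalue equals $\mathrm{Tr}(\qS_0)-\lambda_{\max}(\qS_0)$, and the tautology $\lambda_{\max}(\qS_0)\ge \mathrm{Tr}(\qS_0)-\lambda_{\max}(\qS_0)$ rearranges to $2\qu_0^\dagger\qS_0\qu_0\ge \mathrm{Tr}(\qS_0)$, i.e., $g(\rho)\ge 0$, as required.

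Finally, to reconcile this with the closed-form expression for $g(\rho)$ stated in the lemma, I would apply the Sherman--Morrison identity $(\qI_2+\rho\qu_0\qu_0^\dagger\qS_k)^{-1}=\qI_2-\rho\qu_0\qu_0^\dagger\qS_k/(1+\rho\qu_0^\dagger\qS_k\qu_0)$ for $k=1,2$, which immediately yields $\qu_0^\dagger\qS_k(\qI_2+\rho\qu_0\qu_0^\dagger\qS_k)^{-1}\qu_0=\qu_0^\dagger\qS_k\qu_0/\qu_0^\dagger(\qI_2+\rho\qS_k)\qu_0$ and $\mathrm{Tr}[\qS_k(\qI_2+\rho\qu_0\qu_0^\dagger\qS_k)^{-1}]=\mathrm{Tr}(\qS_k)-\rho\qu_0^\dagger\qS_k^2\qu_0/\qu_0^\dagger(\qI_2+\rho\qS_k)\qu_0$; assembling $2\qu_0^\dagger\qS_0\qu_0-\mathrm{Tr}(\qS_0)$ then reproduces the stated formula. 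The main obstacle is conceptual rather than computational: (\ref{basicEq1}) turns out to be automatic here because $\qu_0$ (defined by the rank-one sub-problem) is already an eigenvector of $\qS_0$, so the binding KKT content is concentrated in (\ref{basicEq2})---namely that $\qu_0$ must be the \emph{top} rather than bottom eigenvector of $\qS_0$, which is precisely the inequality $g(\rho)\ge 0$.
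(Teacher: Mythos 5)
Your proof is correct, and it reaches the result by a different route than the paper. The paper's proof is a self-contained first-order argument: it first verifies directly (via the generalized-eigenvalue relation $(\qI_2+\rho\qS_1)\qu_0=\lambda_0(\qI_2+\rho\qS_2)\qu_0$ and the Sherman--Morrison formula) that $\qu_0$ is an eigenvector of $\qS_0$, then perturbs along the feasible path $(1-t)\qu_0\qu_0^\dagger+t\qv\qv^\dagger$ and computes $\frac{\partial g}{\partial t}\big\vert_{t=0}=\rho(\qv^\dagger\qS_0\qv-\qu_0^\dagger\qS_0\qu_0)$, so that $\qu_0^\dagger\qS_0\qu_0<\lambda_{\max}(\qS_0)$ yields an explicit improving rank-two direction and hence a contradiction with rank-one optimality. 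You instead observe that the two claims are contrapositives (valid, since a trace-one PSD $2\times 2$ matrix has rank one or two and an optimum exists by compactness), reduce to part (ii), note that a rank-one optimum must be $\qu_0\qu_0^\dagger$ by the rank-one subproblem of \S\ref{nT2indef}, identify $\bs\Theta=\rho\qS_0$ there, and then read $g(\rho)\ge 0$ off the already-established necessary condition (\ref{basicEq2}) combined with the $2\times 2$ identity $2\lambda_{\max}(\qS_0)\ge\mathrm{Tr}(\qS_0)$; in fact you only need (\ref{basicEq2}) and Hermiticity of $\qS_0$, so the eigenvector property of $\qu_0$ (which you correctly note is automatic, and which the paper proves separately) is not even required. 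What each approach buys: yours is shorter and reuses the KKT machinery of Theorem \ref{theo:1} (legitimate here, since that theorem precedes the lemma and the constraint set is convex with Slater's condition), whereas the paper's argument avoids invoking Theorem \ref{theo:1} altogether and, when $g(\rho)<0$, constructively exhibits the direction $\qv$ along which a rank-one candidate can be strictly improved, which is useful intuition about why the rank must increase. Your closing Sherman--Morrison reconciliation of the explicit formula for $g(\rho)$ is also correct, though strictly unnecessary since the paper defines $g(\rho)\triangleq 2\qu_0^\dagger\qS_0\qu_0-\mathrm{Tr}(\qS_0)$.
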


Please see Appendix \ref{proofLemRankQ} for details.

Then we can prove the following result.
\begin{theorem}\label{theo:rankQ}
{\it $\lim_{\rho\to\infty} g(\rho)<0$, hence, according to Lemma \ref{lem:rankQ}, there exists a certain $\rho_0$ such that when $\rho>\rho_0$,
the optimal $\qQ$ has rank two.
}
\end{theorem}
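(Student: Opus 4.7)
\noindent\textit{Proof proposal.} The plan is to evaluate $\lim_{\rho\to\infty} g(\rho)$ in closed form, show it is strictly negative, and then invoke Lemma \ref{lem:rankQ} together with the continuity of $g$. Writing $a_i(\rho)=\qu_0^\dagger\qS_i\qu_0$ and $b_i(\rho)=\qu_0^\dagger\qS_i^2\qu_0$, the $i$th ratio in $g$ reads $(2a_i+\rho b_i)/(1+\rho a_i)$, which tends to $b_i/a_i$ whenever $a_i$ is bounded away from zero. The analysis therefore reduces to identifying the limit $\qu_0^\ast$ of $\qu_0(\rho)$ (any subsequential limit suffices by compactness), which splits into the two subcases of \S\ref{nT2} depending on whether $\qS_2$ has full rank or rank one. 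Since $\qS_1\succ 0$ throughout, the first ratio always converges to $b_1/a_1$ evaluated at $\qu_0^\ast$.

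\textit{Case $\qS_2\succ 0$.} Here $(\qI_2+\rho\qS_2)^{-1}(\qI_2+\rho\qS_1)\to\qS_2^{-1}\qS_1$, so $\qu_0^\ast$ satisfies $\qS_1\qu_0^\ast=\lambda_1\qS_2\qu_0^\ast$, where $\lambda_1\ge\lambda_2$ are the generalized eigenvalues of $(\qS_1,\qS_2)$; both exceed $1$ because $\qS_1-\qS_2\succ 0$. The eigen-equation gives $a_1=\lambda_1 a_2$ and $b_1=\|\qS_1\qu_0^\ast\|^2=\lambda_1^2\|\qS_2\qu_0^\ast\|^2=\lambda_1^2 b_2$, hence $b_1/a_1-b_2/a_2=(\lambda_1-1)b_2/a_2$. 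Diagonalizing $\qT\triangleq\qS_2^{-1/2}\qS_1\qS_2^{-1/2}=\sum_i\lambda_i\qe_i\qe_i^\dagger$ (orthonormal $\qe_i$) and noting $\qS_2^{1/2}\qu_0^\ast\propto\qe_1$, one obtains $b_2/a_2=\qe_1^\dagger\qS_2\qe_1$, while $\mathrm{Tr}(\qS_1-\qS_2)=\mathrm{Tr}((\qT-\qI)\qS_2)=\sum_i(\lambda_i-1)\qe_i^\dagger\qS_2\qe_i$. The $i=1$ contributions cancel, leaving
\[
g(\infty) = -(\lambda_2-1)\,\qe_2^\dagger\qS_2\qe_2 < 0.
\]

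\textit{Case $\qS_2=\qv_2\qv_2^\dagger$ has rank one.} Sherman--Morrison gives $(\qI_2+\rho\qS_2)^{-1}(\qI_2+\rho\qS_1)=\rho\,\qP^\perp\qS_1+O(1)$ with $\qP^\perp=\qI_2-\qv_2\qv_2^\dagger/\|\qv_2\|^2$, so $\qu_0^\ast=\qv_2^\perp$, the unit vector orthogonal to $\qv_2$. Writing $\qS_1$ in the basis $\{\qv_2^\perp,\qv_2/\|\qv_2\|\}$ as $\bigl(\begin{smallmatrix}\alpha&\beta\\\overline{\beta}&\gamma\end{smallmatrix}\bigr)$, a first-order perturbation analysis of the resulting $2\times 2$ eigenvalue problem shows that the $\qv_2$-component of $\qu_0(\rho)$ is $O(1/\rho)$; hence $a_2(\rho)=O(1/\rho^2)$, $b_2(\rho)=\|\qv_2\|^2 a_2(\rho)$, and the second ratio $(2a_2+\rho b_2)/(1+\rho a_2)$ vanishes in the limit. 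The first ratio converges to $(\alpha^2+|\beta|^2)/\alpha$ and $\mathrm{Tr}(\qS_1-\qS_2)=\alpha+\gamma-\|\qv_2\|^2$, so
\[
g(\infty)=\frac{|\beta|^2}{\alpha}-\bigl(\gamma-\|\qv_2\|^2\bigr),
\]
which is negative because $\qS_1-\qS_2=\bigl(\begin{smallmatrix}\alpha&\beta\\\overline{\beta}&\gamma-\|\qv_2\|^2\end{smallmatrix}\bigr)\succ 0$ forces its Schur complement $(\gamma-\|\qv_2\|^2)-|\beta|^2/\alpha$ to be strictly positive.

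\textit{Main obstacle.} Case A is routine once $\qT$ is diagonalized; Case B is the delicate half because the second ratio is of indeterminate form $0/0$ at the limit, and obtaining the rate $\|\qu_0(\rho)-\qv_2^\perp\|=O(1/\rho)$ requires the explicit $2\times 2$ perturbation computation for the rank-deficient limit problem. Once $g(\infty)<0$ is established in both subcases, continuity of $g$ on $(0,\infty)$ produces a threshold $\rho_0$ such that $g(\rho)<0$ on $(\rho_0,\infty)$, and Lemma \ref{lem:rankQ} then forces the optimal $\qQ$ to have rank two, completing the proof.
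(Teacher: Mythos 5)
Your argument is correct and follows the same skeleton as the paper's proof: split on the rank of $\qS_2$, identify the limiting direction of $\qu_0(\rho)$, evaluate $\lim_{\rho\to\infty}g(\rho)$, and conclude via Lemma \ref{lem:rankQ}. The execution differs in both cases, though, and the comparison is instructive. For $\qS_2\succ 0$ the paper never computes the limit exactly: it reduces it to $\wt{\qu}_0^\dagger(\qS_1-\qS_2)\qS_2\wt{\qu}_0/(\wt{\qu}_0^\dagger\qS_2\wt{\qu}_0)-\mathrm{Tr}(\qS_1-\qS_2)$, bounds the quotient by $\lambda_{\max}(\qS_1-\qS_2)$, and finishes with $\lambda_{\max}<\mathrm{Tr}$ for a $2\times 2$ positive definite matrix; your simultaneous diagonalization yields the exact value $-(\lambda_2-1)\qe_2^\dagger\qS_2\qe_2$, which is sharper but buys nothing extra for the theorem (if $\lambda_1=\lambda_2$ the top generalized eigenvector is not unique, so to extract a threshold $\rho_0$ you should note that every subsequential value is at most $-(\lambda_2-1)\lambda_{\min}(\qS_2)<0$). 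In the rank-one case your treatment is actually more careful than the paper's: the paper lets the second ratio tend to $\qu_0^\dagger\qS_2^2\qu_0/(\qu_0^\dagger\qS_2\qu_0)=\mathrm{Tr}(\qS_2)$, which tacitly assumes $\rho\,\qu_0^\dagger\qS_2\qu_0\to\infty$, whereas your perturbation estimate $|\qv_2^\dagger\qu_0(\rho)|=O(1/\rho)$ shows $\rho\,\qu_0^\dagger\qS_2\qu_0\to 0$, so that term vanishes; the Schur complement of $\qS_1-\qS_2$ then gives strict negativity using the specific limit vector $\qv_2^\perp$, which is needed, since with the term correctly sent to zero the paper's generic bound $\qu_0^\dagger\qS_1^2\qu_0/(\qu_0^\dagger\qS_1\qu_0)\le\lambda_{\max}(\qS_1)$ alone would only give $-\lambda_{\min}(\qS_1)+\mathrm{Tr}(\qS_2)$, which need not be negative. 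So your version closes a small gap in the published case (ii) while reaching the same (correct) conclusion. One cosmetic point: the final step from $\lim_{\rho\to\infty}g(\rho)<0$ to the existence of $\rho_0$ uses only the definition of the limit, not continuity of $g$.
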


The proof is given in Appendix \ref{proofTheoRankQ}.

Theorem \ref{theo:rankQ} reveals that when the SNR is sufficient large, the optimal $\qQ$ always has rank two.
At this point, we do not prove the rank non-decreasing property of the optimal $\qQ$ for the case $\qS_R-\qS_E\succ 0$ completely.

\section{$\qS_R-\qS_E$ Has All Negative Eigenvalues Except
One Positive Eigenvalue}\label{OnePosiEig}

We analyze the case in which $\qH_R^\dagger\qH_R-\qH_E^\dagger\qH_E$ has all negative eigenvalues except
one positive eigenvalue, e.g., $n_T=3$, $\qH_R^\dagger\qH_R-\qH_E^\dagger\qH_E$ has two negative eigenvalues
and a positive eigenvalue.
In particular, this always occurs when $\qS_E$ has full rank and $n_R=1$, i.e., the legitimate receiver has a single antenna,
as the following lemma stated.

\begin{lemma}\label{lem:posi}
{\it Let $\qx$ is a known non-zero vector, $\qX$ is a known positive semi-definite matrix.
Assume $\qx\qx^\dagger-\qX$ is indefinite or positive semi-definite.
If $\qX$ has full rank, or $\qX$ has rank $n_T-1$ and $\qx$ is linearly independent of the eigenvectors associated with
the non-zero eigenvalues of $\qX$,
then $\qx\qx^\dagger-\qX$ has all negative eigenvalues except one positive eigenvalue.}
\end{lemma}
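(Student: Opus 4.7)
The overall plan is to view $B := \qx\qx^\dagger - \qX$ as a rank-one positive semidefinite perturbation of $-\qX$, apply Cauchy's interlacing theorem to control the bulk of the eigenvalues, and then use a short determinant computation to pin down the sign pattern in the rank-deficient case. Let $a_1\le\cdots\le a_{n_T}$ denote the ordered eigenvalues of $-\qX$ and $b_1\le\cdots\le b_{n_T}$ those of $B$. Writing $B = (-\qX) + \qx\qx^\dagger$ as a rank-one positive update, Cauchy interlacing yields
\equ{
a_i \le b_i \le a_{i+1},\ \ i=1,\ldots,n_T-1, \qquad b_{n_T}\ge a_{n_T}.
}

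\textbf{Full-rank case.} Here every $a_i$ is strictly negative, so $b_i \le a_{i+1}<0$ for $i\le n_T-1$, producing $n_T-1$ strictly negative eigenvalues of $B$. The hypothesis that $B$ is indefinite or positive semidefinite rules out $B\preceq 0$, so $b_{n_T}>0$, giving the claimed signature $(p,q)=(1,n_T-1)$.

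\textbf{Rank $n_T-1$ case.} Now $a_1\le\cdots\le a_{n_T-1}<a_{n_T}=0$, and interlacing only yields $b_1,\ldots,b_{n_T-2}<0$ together with $b_{n_T-1}\le 0\le b_{n_T}$, leaving the strictness at the two eigenvalues nearest zero undetermined. I would resolve this by choosing an orthonormal eigenbasis $\{\qu_1,\ldots,\qu_{n_T-1},\qv\}$ of $\qX$, where $\qv$ spans the null space and $\mu_1,\ldots,\mu_{n_T-1}>0$ are the nonzero eigenvalues. The linear-independence hypothesis translates exactly into $c:=\qv^\dagger\qx\ne 0$. Expressing $B$ in this basis as a block matrix and taking the Schur complement with respect to the $(n_T,n_T)$ block (whose value is $|c|^2$) collapses the rank-one perturbation term and yields
\equ{
\det B = (-1)^{n_T-1}\,|c|^2\prod_{i=1}^{n_T-1}\mu_i\ \ne\ 0.
}
So $B$ is nonsingular. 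Combined with interlacing ($p\le 1$ and $q\ge n_T-2$) and the sign identity $\mathrm{sgn}(\det B)=(-1)^q=(-1)^{n_T-1}$, the only consistent signature is $(p,q)=(1,n_T-1)$.

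\textbf{Main obstacle.} The full-rank case is essentially immediate from interlacing plus the indefinite/psd hypothesis. The harder step is the rank-deficient case, where interlacing alone is ``soft'' on the two eigenvalues nearest zero and a priori allows $b_{n_T-1}=0$ or $b_{n_T}=0$. The Schur-complement determinant computation, which relies crucially on $c\ne 0$, is the arithmetic tool that excludes a zero eigenvalue of $B$ and thereby fixes the exact signature.
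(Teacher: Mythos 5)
Your proof is correct, and it takes a genuinely different route from the paper's. The paper handles the two cases with two different tools: for nonsingular $\qX$ it notes that any eigenvalue $\lambda\ge 0$ of $\qx\qx^\dagger-\qX$ must solve the secular equation $\qx^\dagger(\lambda\qI+\qX)^{-1}\qx=1$, whose left side is strictly decreasing in $\lambda$, so there is at most one nonnegative eigenvalue, with the indefinite/PSD hypothesis supplying the positive one; for $\mathrm{rank}(\qX)=n_T-1$ it writes $\qx\qx^\dagger-\qX=\qF_x\qD_x\qF_x^\dagger$ with $\qF_x=[\qx,\qu_1,\cdots,\qu_{n_T-1}]$ and $\qD_x=\mathrm{diag}(1,-\lambda_1,\cdots,-\lambda_{n_T-1})$, observes that the linear-independence hypothesis makes $\qF_x$ nonsingular, and reads the signature off Sylvester's law of inertia. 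You instead treat both cases uniformly via the interlacing inequalities for a rank-one positive semidefinite update (strictly a Weyl-type rank-one interlacing theorem rather than Cauchy's submatrix interlacing, but the inequalities you invoke are the standard correct ones), and in the rank-deficient case you exclude a zero eigenvalue through the Schur-complement identity $\det(\qx\qx^\dagger-\qX)=(-1)^{n_T-1}|c|^2\prod_i\mu_i$, where $c=\qv^\dagger\qx\ne 0$ is exactly the linear-independence hypothesis. The paper's congruence-plus-inertia step is the more economical path to the exact signature in the rank-deficient case, and its secular-equation argument yields the slightly sharper fact that at most one eigenvalue is nonnegative; your version buys a single mechanism for both cases, tracks multiplicities transparently through ordered eigenvalues, and makes explicit that the indefinite/PSD hypothesis is only needed when $\qX$ is nonsingular --- indeed, once $\det(\qx\qx^\dagger-\qX)\ne 0$ is established, interlacing alone already forces $b_{n_T-1}<0<b_{n_T}$, so your final determinant-sign/parity step is redundant, though harmless.
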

\begin{proof}
First, we prove the case that $\qX$ has full rank.
Let $\lambda$ be any eigenvalue of $\qx\qx^\dagger-\qX$.
It holds $\det(\lambda\qI-\qx\qx^\dagger+\qX)=0$.
When $\lambda\ge 0$, noting that $\lambda\qI+\qX\succ 0$,
we get $\det(\lambda\qI-\qx\qx^\dagger+\qX)
=\det(\lambda\qI+\qX)\det(\qI-(\lambda\qI+\qX)^{-1}\qx\qx^\dagger)
=\det(\lambda\qI+\qX)(1-\qx^\dagger(\lambda\qI+\qX)^{-1}\qx)=0$ which leads to
$\qx^\dagger(\lambda\qI+\qX)^{-1}\qx=1$.
Here we use the fact
$\det(\qI+\qA\qB)=\det(\qI+\qB\qA)$.
It is easy to prove that
$\qx^\dagger(\lambda\qI+\qX)^{-1}\qx$ decreases strictly with $\lambda$.
Thus, there is at most one $\lambda\ge 0$ such that $\qx^\dagger(\lambda\qI+\qX)^{-1}\qx=1$.
This, when combined with the fact that $\qx\qx^\dagger-\qX$ is indefinite or positive semi-definite,
gives the desired result.

Second, we prove the case that $\qX$ has rank $n_T-1$ and $\qx$ is linearly independent of the eigenvectors associated with
the non-zero eigenvalues of $\qX$.
Denote the eigen-decomposition $\qX=\lambda_1\qu_1\qu_1^\dagger+\cdots+\lambda_{n_T-1}\qu_{n_T-1}\qu_{n_T-1}^\dagger$.
We can write $\qx\qx^\dagger-\qX=\qF_x\qD_x\qF_x^\dagger$ where $\qF_x=[\qx, \qu_1, \cdots, \qu_{n_T-1}]$,
$\qD_x=\mathrm{diag}(1,-\lambda_1,\cdots,-\lambda_{n_T-1})$.
Since $\qx$ is linearly independent of the eigenvectors associated with
the non-zero eigenvalues of $\qX$, it holds that $\qF_x$ has full rank.
According to Sylvester's law of inertia \cite[p.223]{Horn}, we know $\qF_x\qD_x\qF_x^\dagger$ and $\qD_x$
have the same number of positive, negative, and zero eigenvalues,
thus the desired result is obtained.\\
\end{proof}

According to the above lemma:
when $n_R=1$ (hence, $\qS_R$ can be expressed $\qS_R=\qv_r\qv_r^\dagger$) and $\qS_E$ has full rank,
$\qS_R-\qS_E$ has all negative eigenvalues except one positive eigenvalue;
when $n_R=1$, $\qS_E$ has rank $n_T-1$ and $\qv_r$ is linearly independent of the eigenvectors associated with
the non-zero eigenvalues of $\qS_E$,
$\qS_R-\qS_E$ has all negative eigenvalues except one positive eigenvalue.
But we point out that it does not limit to the cases in Lemma \ref{lem:posi} in which
$\qH_R^\dagger\qH_R-\qH_E^\dagger\qH_E$ has all negative eigenvalues except
one positive eigenvalue. In fact, this will even occur when $\qS_R$ and $\qS_E$ both have full rank.
When $\qH_R^\dagger\qH_R-\qH_E^\dagger\qH_E$ has all negative eigenvalues except
one positive eigenvalue, we give the following theorem.
\begin{theorem}\label{theo:2}
{\it If $\qH_R^\dagger\qH_R-\qH_E^\dagger\qH_E$ has all negative eigenvalues except
one positive eigenvalue, the optimal $\qQ$ has rank one. Also, the optimal $\qQ$
is given by $\qQ^\circ=\qu^\circ{\qu^\circ}^\dagger$
where $\qu^\circ$ is the eigenvector associated with the largest eigenvalue of
$(\qI_{n_T}+\qS_E)^{-1}(\qI_{n_T}+\qS_R)$.
The secrecy capacity is given by
\equ{
C_s=\log\left(\lambda_{\max}\{(\qI_{n_T}+\qS_E)^{-1}(\qI_{n_T}+\qS_R)\}\right).
}
}
\end{theorem}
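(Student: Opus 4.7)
The plan is to combine Property~\ref{prop:2} (rank reduction) with Property~\ref{prop:1}(ii) (positive-semidefiniteness of $\qQ(\qS_R-\qS_E)\qQ$) and a Sylvester-inertia subspace-dimension argument to pin down that the optimal $\qQ$ has rank exactly one; a generalized Rayleigh-quotient computation would then yield the explicit eigenvector formula for $\qu^\circ$ and the closed-form expression for $C_s$.

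First I would observe that $\qS_R-\qS_E=\rho(\qH_R^\dagger\qH_R-\qH_E^\dagger\qH_E)$ is indefinite by hypothesis, so Property~\ref{prop:2} immediately gives that the optimal $\qQ$ is rank deficient; the real work is sharpening ``low rank'' to ``rank exactly one''. For this I would invoke Property~\ref{prop:1}(ii), which says $\qQ(\qS_R-\qS_E)\qQ\succeq 0$. Letting $V:=\mathrm{range}(\qQ)$, this is equivalent to the Hermitian form $\qv\mapsto\qv^\dagger(\qS_R-\qS_E)\qv$ being positive semi-definite on $V$, since $\qw^\dagger\qQ(\qS_R-\qS_E)\qQ\qw=(\qQ\qw)^\dagger(\qS_R-\qS_E)(\qQ\qw)$ for every $\qw$ and $\qQ\qw$ ranges over $V$.

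Next I would exploit the signature hypothesis. Let $W\subseteq\mathbb{C}^{n_T}$ be the $(n_T-1)$-dimensional invariant subspace of $\qS_R-\qS_E$ spanned by the eigenvectors associated with its $n_T-1$ strictly negative eigenvalues. On $W\setminus\{0\}$ the form is strictly negative, while on $V$ it is semi-definite from above, so $V\cap W=\{0\}$. The dimension inequality $\dim V+\dim W\le n_T$ then forces $\dim V\le 1$, and since $\mathrm{Tr}(\qQ)=1$ rules out $\qQ=0$, the rank is exactly one. Writing $\qQ^\circ=\qu^\circ{\qu^\circ}^\dagger$ with $\|\qu^\circ\|=1$, the secrecy rate collapses to
\al{
C_s(\qQ^\circ)=\log\frac{{\qu^\circ}^\dagger(\qI_{n_T}+\qS_R)\qu^\circ}{{\qu^\circ}^\dagger(\qI_{n_T}+\qS_E)\qu^\circ},
}
and since $\qI_{n_T}+\qS_R\succ 0$ and $\qI_{n_T}+\qS_E\succ 0$, this generalized Rayleigh quotient is maximized by the eigenvector of $(\qI_{n_T}+\qS_E)^{-1}(\qI_{n_T}+\qS_R)$ corresponding to its largest eigenvalue, giving the stated expressions for $\qu^\circ$ and $C_s$.

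The hard part is the rank-one reduction: converting the matrix inequality $\qQ(\qS_R-\qS_E)\qQ\succeq 0$ into a maximal-dimension statement about a semi-definite subspace and tying that dimension to the inertia of $\qS_R-\qS_E$. The argument via $V\cap W=\{0\}$ is clean only because the hypothesis rules out any zero eigenvalues of $\qS_R-\qS_E$; with additional zero eigenvalues one would have to absorb the kernel into $W$, the dimension bound would weaken, and the rank of the optimal $\qQ$ could exceed one — which is presumably why the theorem is stated under precisely this signature condition. Once rank one is established, the remaining Rayleigh-quotient step is routine and does not require the KKT/fixed-point machinery of Section~\ref{sec:4}.
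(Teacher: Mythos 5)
Your proof is correct, and the decisive rank-one step is obtained by a genuinely different route from the paper's. The paper also starts from Property~\ref{prop:1}(ii), but then proves the claim ``any $\qQ\succeq 0$ with $\qQ(\qS_R-\qS_E)\qQ\succeq 0$ has rank one'' by induction on $n_T$: it diagonalizes $\qS_R-\qS_E$, writes $\qX=\qU^\dagger\qQ\qU$ in bordered block form, and uses Lemma~\ref{lem:2} together with a chain of Schur-type positivity conditions to force the $(k+1)$-dimensional matrix to be an outer product, stepping from dimension $k$ to $k+1$. You replace all of this with the observation that $\qQ(\qS_R-\qS_E)\qQ\succeq 0$ makes the Hermitian form $\qv\mapsto\qv^\dagger(\qS_R-\qS_E)\qv$ positive semi-definite on $\mathrm{range}(\qQ)$, which therefore meets the $(n_T-1)$-dimensional strictly negative eigenspace only in $\{0\}$, so $\mathrm{rank}(\qQ)\le 1$ by dimension counting; together with $\mathrm{Tr}(\qQ)=1$ this pins the rank at one. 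This is shorter, avoids Lemma~\ref{lem:2} and the block computation entirely, and generalizes immediately (inertia $(p,\,n_T-p,\,0)$ yields $\mathrm{rank}(\qQ)\le p$), whereas the paper's induction is tailored to a single positive eigenvalue; your closing remark about zero eigenvalues correctly identifies why strict negativity is needed. The final generalized Rayleigh-quotient step is the same as the paper's treatment in (\ref{optQnT2indef}) of \S\ref{nT2indef}, which the appendix proof leaves implicit. The only caveat, shared with the paper, is that the argument presumes an optimizer exists (compactness of $\Omega$) and satisfies the KKT-derived Property~\ref{prop:1}, so no new gap is introduced.
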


Please see Appendix \ref{proofTheo2} for details.

Similar to \S\ref{nT2indef}, if $\qH_E^\dagger\qH_E$ has full rank, we obtain
\equ{
s.d.o.f = \lim_{\rho\to \infty}\frac{C_s(\rho)}{\log\rho} = 0
}
and if $\qH_E^\dagger\qH_E$ has low rank, we obtain
\equ{
s.d.o.f = \lim_{\rho\to \infty}\frac{C_s(\rho)}{\log\rho} = 1.
}

\section{Algorithm for General MIMO Wiretap Channel}\label{algori}

In this section, we propose an algorithm to search for the optimal $\qQ$
which applies for any MIMO wiretap channel.
The algorithm is based on the conditions of Theorem \ref{theo:1} or Theorem \ref{theo:singleEq}.

It follows from (\ref{algoBasicEq1}) that
\equ{
\qK^{1/2}\qQ\qK^{1/2}=\mathrm{Tr}(\qQ\qK)\qQ
}
which enables us to get
\equ{
\qQ=\frac{\qK^{1/2}\qQ\qK^{1/2}}{\mathrm{Tr}(\qQ\qK)}
\triangleq f(\qQ).\label{mapping2}
}
Note that $f(\qQ)\succeq 0$ and $\mathrm{Tr}(f(\qQ))=1$ for any $\qQ\in\Omega$.
The equation (\ref{mapping2}) defines a mapping from a convex set to itself: $\Omega \to \Omega$, $\qQ \mapsto f(\qQ)$.
The optimal $\qQ$ corresponds to a fixed point of $f(\qQ)$, i.e., $f(\qQ^\circ)=\qQ^\circ$.
To search for the fixed point, the iterative expression is
\equ{
\qQ^{k+1}=f(\qQ^{k}), k=0,1,\cdots
}
The initial point $\qQ^0$ can be set to $\qI_{n_T}$ or choose a good initial point.
The iterations stop when $\|\qQ^{k+1}-\qQ^k\|<10^{-6}$. If the convergent $\qQ$ satisfies (\ref{algoBasicEq2}),
we obtain a solution satisfying the conditions of Theorem \ref{theo:1},
otherwise, we choose a different initial point.

\section{Numerical Simulations}\label{sec:sim}

We give some examples to illustrate the proposed algorithm.
For illustration purpose, we consider a MIMO wiretap channel where $n_T=4$, $n_R=4$, $n_E=3$.

First, we take an example for $\qH_R^\dagger\qH_R-\qH_E^\dagger\qH_E\succ 0$. The channel matrices are given by
\equ{
\qH_R=\left(%
\begin{array}{cccc}
  -0.1107 - 0.1225i & 0.0582 - 0.3483i & 0.3239 - 0.0071i & -0.2872 - 0.2655i \\
  0.5128 - 0.3239i & -0.8903 - 0.0318i & -0.5524 - 0.0365i & -0.2072 + 0.3047i \\
  -0.0041 + 0.0265i & 0.0871 - 0.0253i & 0.0183 + 1.1679i & -0.0784 + 0.0415i \\
  -0.4699 - 0.1014i & -0.0888 + 0.1127i & 0.2099 + 0.3282i & 0.1734 - 0.4146i \\
\end{array}%
\right)
}
and
\equ{
\qH_E=\left(%
\begin{array}{cccc}
  -0.0766 + 0.1370i & -0.0977 - 0.0985i & 0.0002 - 0.0695i & 0.0583 + 0.0356i \\
  -0.0355 - 0.1167i & 0.1607 - 0.1091i & -0.0809 + 0.1481i & -0.0218 + 0.1109i \\
  0.1375 - 0.0381i & -0.0845 - 0.0610i & -0.0011 + 0.1129i & -0.0393 + 0.1124i \\
\end{array}%
\right).
}
The eigenvalues of $\qH_R^\dagger\qH_R-\qH_E^\dagger\qH_E$ are $0.0085,    0.3704,    0.8945,    2.5213$.
Fig. \ref{fig:1}-\ref{fig:3} depict respectively the eigenvalues of $\qQ^k$, secrecy rate and $\|\qQ^{k+1}-\qQ^k\|$ in the iterations
where the SNR is $\rho=8\, \mbox{dB}$.
Fig. \ref{fig:4}-\ref{fig:5} depict respectively the (possible) secrecy capacity and eigenvalues of (possible) optimal $\qQ$ under different SNRs.
It can be seen from Fig. \ref{fig:5} that the (possible) optimal $\qQ$ can have rank one to four with the increasing SNR,
which shows that
when $\qH_R^\dagger\qH_R-\qH_E^\dagger\qH_E\succ 0$, the (possible) optimal $\qQ$ may have low rank or full rank.

Secondly, we take an example for $\qH_R^\dagger\qH_R-\qH_E^\dagger\qH_E\nsucc 0$. The channel matrices are given by
\equ{
\qH_R=\left(%
\begin{array}{cccc}
  -0.1110 - 0.0667i & -0.1937 - 0.1349i & -0.0752 - 0.2707i & -0.2718 + 0.2730i \\
  0.2877 + 0.6779i & -0.7832 - 0.2249i & 0.4350 + 0.2637i & 0.4160 + 0.5109i \\
  0.3266 - 0.2779i & -0.2345 - 0.4472i & 0.2448 + 0.3488i & -0.6794 - 0.0117i \\
  -0.1221 + 0.4915i & 0.0959 - 0.2557i & -0.0219 + 0.5077i & 0.1449 + 0.3294i \\
\end{array}%
\right)
}
and
\equ{
\qH_E=\left(%
\begin{array}{cccc}
  0.1468 - 0.1185i & 0.4071 + 0.4469i & 0.2474 - 0.3291i & -0.6264 - 0.1313i \\
  -0.0520 + 0.2917i & -0.4978 + 0.0545i & 0.0779 - 0.3472i & -0.0132 - 0.1327i \\
  0.5799 - 0.1767i & 0.2298 + 0.3331i & -0.1151 - 0.2000i & 0.1404 - 0.3501i \\
\end{array}%
\right).
}
The eigenvalues of $\qH_R^\dagger\qH_R-\qH_E^\dagger\qH_E$ are $-0.8206,   -0.1565,    0.9365,    1.8506$.
Figures \ref{fig:6}-\ref{fig:8} depict respectively the eigenvalues of $\qQ^k$, secrecy rate and $\|\qQ^{k+1}-\qQ^k\|$ in the iterations
where the SNR is $\rho=8\, \mbox{dB}$.
Figures \ref{fig:9}-\ref{fig:10} depict respectively the (possible) secrecy capacity and eigenvalues of (possible) optimal $\qQ$ under different SNRs.
It can be seen from Fig. \ref{fig:10} that the (possible) optimal $\qQ$ always has rank two,
which equals the number of positive eigenvalues of $\qH_R^\dagger\qH_R-\qH_E^\dagger\qH_E$.

\section{Conclusion}\label{sec:conclu}

We have investigated the problem of finding the optimal input covariance matrix that achieves secrecy capacity subject to a power constraint.
In particular, for the multiple-input single-output (MISO) wiretap channel, the optimal input covariance matrix is obtained in closed form.
For general cases, we derive the necessary conditions for the optimal solution consisting of a set of equations.
For the case in which the transmitter has two antennas, the derived necessary conditions can result in
a closed form solution.
If the difference is indefinite and has all negative eigenvalues except one positive eigenvalue,
we prove that the optimal input covariance matrix has rank one and can be obtained in closed form.
For other cases, we prove that the solution is a fixed point of a mapping from a convex set to itself and provide an iterative procedure to
search for it.

\appendices

\section{Proof of Lemma \ref{lem:1}}\label{proofLem1}

First we show the necessary part. When $\qH_R^\dagger\qH_R-\qH_E^\dagger\qH_E$ is negative semi-definite,
so is $\qR_x^{1/2}(\qH_R^\dagger\qH_R-\qH_E^\dagger\qH_E)\qR_x^{1/2}$ which leads to
$\qI_{n_T}+\qR_x^{1/2}\qH_E^\dagger\qH_E\qR_x^{1/2}/\sigma^2 \succeq \qI_{n_T}+\qR_x^{1/2}\qH_R^\dagger\qH_R\qR_x^{1/2}/\sigma^2$.
With this, using the fact: if $\qA\succ 0$, $\qB\succ 0$, $\qA \succeq \qB$ then $\det(\qA)\ge \det(\qB)$,
and applying the identity $\mathrm{det}(\qI+\qA\qB)=\mathrm{det}(\qI+\qB\qA)$ to (\ref{SecRate}) results in $C_s(\qR_x)\le 0$.
To show the sufficient part, we rewrite (\ref{SecRate}) as
\equ{
C_s(\qR_x)=\log\mathrm{det}\left(\qI_{n_T}+\frac{1}{\sigma^2}\qR_x^{1/2}(\qH_R^\dagger\qH_R-\qH_E^\dagger\qH_E)\qR_x^{1/2}
(\qI_{n_T}+\qR_x^{1/2}\qH_E^\dagger\qH_E\qR_x^{1/2}/\sigma^2)^{-1}\right).
}
Note that $(\qI_{n_T}+\qR_x^{1/2}\qH_E^\dagger\qH_E\qR_x^{1/2}/\sigma^2)^{-1}\succ 0$,
and it suffices to show that there exists $\qR_x$ such that $\qR_x^{1/2}(\qH_R^\dagger\qH_R-\qH_E^\dagger\qH_E)\qR_x^{1/2}\succeq 0$.
Let us define the eigen-decomposition $\qH_R^\dagger\qH_R-\qH_E^\dagger\qH_E=\qU_r\qD_r\qU_r^\dagger$.
It is easy to verify that $\qR_x=\qU_r\qD_x\qU_r^\dagger$
is a choice where the entries of the diagonal $\qD_x$ are zero corresponding to the position of negative entries in $\qD_r$.

\section{Proof of Property \ref{prop:Theta}}\label{proofPropTheta}

First, we prove the former part.
If $\qS_R\succ 0$ and $\qS_E\succ 0$, then we can rewrite (\ref{Theta2}) as
\equ{
{\bs\Theta}=(\qS_R^{-1}+\qQ)^{-1}-(\qS_E^{-1}+\qQ)^{-1}.
}
We can state that ${\bs\Theta}$ is not negative semi-definite,
otherwise, we get $\qS_E\succeq \qS_R$ which violates the assumption that $\qS_R-\qS_E$
is indefinite or positive semi-definite.

Next we consider the case that $\qS_R$ or $\qS_E$ are singular.
Denote $\qS_R-\qS_E={\bs\Delta}$.
By using the fact: $\qS_E(\qI+\qQ\qS_E)^{-1}=(\qI+\qS_E\qQ)^{-1}\qS_E$,
we can rewrite (\ref{Theta2}) as
\al{
{\bs\Theta}&=\qS_R(\qI_{n_T}+\qQ\qS_R)^{-1}-(\qI_{n_T}+\qS_E\qQ)^{-1}\qS_E\nnb\\
&=(\qI_{n_T}+\qS_E\qQ)^{-1}{\bs\Delta}(\qI_{n_T}+\qQ\qS_R)^{-1}\nnb\\
&=(\qI_{n_T}+\qS_E\qQ)^{-1}\qP(\qI_{n_T}+\qQ\qS_E)^{-1}\label{appTheta}
}
where $\qP={\bs\Delta}(\qI_{n_T}+(\qI_{n_T}+\qQ\qS_E)^{-1}\qQ{\bs\Delta})^{-1}$.
According to Sylvester's law of inertia \cite[p.223]{Horn}, it suffices to prove that
$\qP$ has positive eigenvalue. If ${\bs\Delta}$ is nonsingular, we can write
\equ{
\qP=({\bs\Delta}^{-1}+(\qI_{n_T}+\qQ\qS_E)^{-1}\qQ)^{-1}.
}
Note that $(\qI_{n_T}+\qQ\qS_E)^{-1}\qQ\succeq 0$
and using the assumption ${\bs\Delta}$ is indefinite or positive semi-definite,
hence, we get that $\qP$ has positive eigenvalue. If ${\bs\Delta}$ is singular, there exists $\delta_0> 0$ such that
for any $0<\delta<\delta_0$, ${\bs\Delta}'={\bs\Delta}-\delta\qI_{n_T}$ is nonsingular and also indefinite or positive semi-definite.
Similarly, we can prove that $\qP'={\bs\Delta}'(\qI_{n_T}+(\qI_{n_T}+\qQ\qS_E)^{-1}\qQ{\bs\Delta}')^{-1}$
has positive eigenvalue. Next we prove $\qP\succ \qP'$. Denote $\qW=(\qI_{n_T}+\qQ\qS_E)^{-1}\qQ$.
Similar to the skill in (\ref{appTheta}), we get
\equ{
\qP-\qP'=(\qI_{n_T}+{\bs\Delta}\qW)^{-1}\qW_1(\qI_{n_T}+\qW{\bs\Delta})^{-1}
}
where $\qW_1=\delta(\qI_{n_T}-\delta(\qI_{n_T}+\qW{\bs\Delta})^{-1}\qW)^{-1}$.
We can see that when $\delta$ is sufficient small, $\qW_1$ is positive definite. Thus, $\qP\succ \qP'$ and $\qP$
has positive eigenvalue.

Second, we prove the latter part.
Denote the eigen-decomposition ${\bs\Theta}=\qU_{\Theta}\qD_{\Theta}\qU_{\Theta}^\dagger$
and let $\qQ_1=\qU_{\Theta}^\dagger\qQ\qU_{\Theta}$. We know $\qQ_1\succeq 0$ and $\mathrm{Tr}(\qQ_1)=\mathrm{Tr}(\qQ)$. With these, we can write
\equ{
\mathrm{Tr}(\qQ{\bs\Theta})=\mathrm{Tr}(\qQ_1\qD_{\Theta})=\mathrm{Tr}(\mathrm{diag}(\qQ_1)\qD_{\Theta})
\le \mathrm{Tr}(\qQ_1)\lambda_{\max}(\qD_{\Theta})=\mathrm{Tr}(\qQ)\lambda_{\max}(\bs\Theta).
}
In particular, if $\mathrm{Tr}(\qQ)=1$, we get $\mathrm{Tr}(\qQ{\bs\Theta})\le \lambda_{\max}(\bs\Theta)$.

\section{Proof of Theorem \ref{theo:1}}\label{proofTheo1}

It follows from (\ref{KKTcond2}) that ${\bs \Psi}\qQ=\qQ{\bs\Psi}=0$, that is, ${\bs\Psi}$ and $\qQ$ commute and
have the same eigenvectors \cite[p.239]{Davis} and their eigenvalue patterns are complementary
in the sense that if $\lambda_i(\qQ) > 0$, then $\lambda_i({\bs \Psi}) = 0$, and vice
versa \cite{Vu}.
This result, when combined with (\ref{KKTcond1}), implies that ${\bs\Theta}$ and $\qQ$
commute and have the same eigenvectors, i.e., they have the eigen-decompositions
$\qQ=\qU_q\qD_q\qU_q^\dagger$ and ${\bs \Theta}=\qU_q\qD_{\Theta}\qU_q^\dagger$.
Further, we get ${\bs\Theta}\qQ=\qQ{\bs \Theta}=\theta \qQ$, which, when combined with $\mathrm{Tr}(\qQ)=1$ and
the fact $\mathrm{Tr}(\qQ{\bs \Theta})=\mathrm{Tr}(\qQ^{1/2}{\bs \Theta}\qQ^{1/2})$ is always real, leads to
$\theta=\mathrm{Tr}(\qQ{\bs \Theta})$ and (\ref{basicEq1}) (also see \cite{Li}).

The condition (\ref{basicEq1}) reveals that for the  optimal $\qQ$,  $\qQ{\bs \Theta}$ is a scaled version  of $\qQ$.
Further the eigenvalues of ${\bs\Theta}$ corresponding to the positive eigenvalues of $\qQ$ are all equal to
 $\mathrm{Tr}(\qQ{\bs \Theta})$, while the remaining eigenvalues of ${\bs\Theta}$ are all less than  or equal
to $\mathrm{Tr}(\qQ{\bs \Theta})$, which follows from (\ref{KKTcond1}), (\ref{basicEq1}) and ${\bs\Psi}\succeq 0$.
Based on the above, it holds the second condition (\ref{basicEq2}).

\section{Proof of Theorem \ref{theo:MISOopt}}\label{proofTheoMISOopt}

In the MISO wiretap channel, $\qS_R=\rho\qh_R\qh_R^\dagger$ and $\qS_E=\rho\qh_E\qh_E^\dagger$.
By using the matrix inverse formula for two vectors $\qx$ and $\qy$
\equ{
(\qI+\qx\qy^\dagger)^{-1}=\qI-\qx\qy^\dagger/(1+\qy^\dagger\qx),\label{matrixinverse}
}
We can write
\equ{
{\bs\Theta}=\frac{\rho\qh_R\qh_R^\dagger}{1+\rho\qh_R^\dagger\qQ\qh_R}-
\frac{\rho\qh_E\qh_E^\dagger}{1+\rho\qh_E^\dagger\qQ\qh_E}.
}
That is to say, ${\bs\Theta}$ has the form of $\alpha_1\qh_R\qh_R^\dagger-\alpha_2\qh_E\qh_E^\dagger$,
$\alpha_1>0, \alpha_2>0$. According to Lemma \ref{lem:2}, we know: if $\qh_R=\xi\qh_E$, then ${\bs\Theta}$
has only one nonzero eigenvalue; if $\qh_R\ne \qh_E$, then ${\bs\Theta}$ has only two nonzero eigenvalues,
one is positive and the other is negative. With this, since $\qQ$ satisfies (\ref{basicEq1}),
it is easy to verify $\qQ$ has rank one. Let $\qQ=\qu\qu^\dagger$ and we have
\al{
{\bs\Theta}&=\frac{\rho\qh_R\qh_R^\dagger}{1+\rho\qh_R^\dagger\qu\qu^\dagger\qh_R}-
\frac{\rho\qh_E\qh_E^\dagger}{1+\rho\qh_E^\dagger\qu\qu^\dagger\qh_E}\\
\mathrm{Tr}(\qQ{\bs\Theta})&=\frac{1}{1+\rho\qh_E^\dagger\qu\qu^\dagger\qh_E}-\frac{1}{1+\rho\qh_R^\dagger\qu\qu^\dagger\qh_R}.
}
Let $\omega_1=1+\rho\qh_R^\dagger\qu\qu^\dagger\qh_R$, $\omega_2=1+\rho\qh_E^\dagger\qu\qu^\dagger\qh_E$.
According to Lemma \ref{lem:2}, the largest eigenvalue of ${\bs\Theta}$ is given by
\equ{
\lambda_{\max}({\bs\Theta})=\frac{\rho\|\qh_R\|^2}{2\omega_1}-\frac{\rho\|\qh_E\|^2}{2\omega_2}
+\frac{1}{2}\sqrt{\left(\frac{\rho\|\qh_R\|^2}{\omega_1}+\frac{\rho\|\qh_E\|^2}{\omega_2}\right)^2
-\frac{4\rho^2|\qh_R^\dagger\qh_E|^2}{\omega_1\omega_2}}.
}
Since $\qQ$ satisfies (\ref{basicEq2}), we have
\equ{
\frac{\rho\|\qh_R\|^2}{2\omega_1}-\frac{\rho\|\qh_E\|^2}{2\omega_2}
+\frac{1}{2}\sqrt{\left(\frac{\rho\|\qh_R\|^2}{\omega_1}+\frac{\rho\|\qh_E\|^2}{\omega_2}\right)^2
-\frac{4\rho^2|\qh_R^\dagger\qh_E|^2}{\omega_1\omega_2}}=\frac{1}{\omega_2}-\frac{1}{\omega_1}
}
which leads to
\equ{
1+\frac{\rho\|\qh_R\|^2}{2}-\left(1+\frac{\rho\|\qh_E\|^2}{2}\right)\frac{\omega_1}{\omega_2}
+\frac{\rho}{2}\sqrt{\left(\|\qh_R\|^2+\|\qh_E\|^2\frac{\omega_1}{\omega_2}\right)^2
-4|\qh_R^\dagger\qh_E|^2\frac{\omega_1}{\omega_2}}=0.\label{equomega1omega2}
}
This equation (\ref{equomega1omega2}) is exactly (\ref{equFAlpha}), i.e., $F(\alpha)=0$ where $\alpha=\omega_1/\omega_2$.
On the other hand, we know
\equ{
\frac{\omega_1}{\omega_2}=\frac{1+\rho\qh_R^\dagger\qu\qu^\dagger\qh_R}{1+\rho\qh_E^\dagger\qu\qu^\dagger\qh_E}.\label{omega12omega2}
}
According to the result in \S\ref{sec:3}, the root of $F(\alpha)=0$ corresponds to the maximization of the right hand side (RHS) of
(\ref{omega12omega2}) (see also (\ref{SecRateMaxMISO1})). Thus, the conditions of Theorem \ref{theo:1}
guarantee the optimal input covariance.

\section{Proof of Property \ref{prop:1}}\label{proofProp1}

From (\ref{Theta2}), we know that $\qQ{\bs \Theta}=(\qI_{n_T}+\qQ\qS_E)^{-1}-(\qI_{n_T}+\qQ\qS_R)^{-1}$.
With this, Left-multiplication by $\qI_{n_T}+\qQ\qS_E$ and right-multiplication by $(\qI_{n_T}+\qQ\qS_R)\qQ$
 of both sides of (\ref{basicEq1})
results in
\equ{
\qQ(\qS_R-\qS_E)\qQ=\mathrm{Tr}(\qQ{\bs \Theta})(\qQ+\qQ\qS_E\qQ)(\qQ+\qQ\qS_R\qQ).\label{thetalargerzero1}
}
Note that the left hand side (LHS) of (\ref{thetalargerzero1}) is Hermitian, hence,
$(\qQ+\qQ\qS_E\qQ)(\qQ+\qQ\qS_R\qQ)$ is Hermitian as well, which implies that the matrices
$\qQ+\qQ\qS_E\qQ$ and $\qQ+\qQ\qS_R\qQ$ commute and have the same eigenvectors.
On the other hand, the RHS of (\ref{thetalargerzero1}) has all non-negative eigenvalues,
thus $\qQ(\qS_R-\qS_E)\qQ\succeq 0$.

Finally, recalling (\ref{appTheta}), ${\bs\Theta}=(\qI_{n_T}+\qS_E\qQ)^{-1}(\qS_R-\qS_E)(\qI_{n_T}+\qQ\qS_R)^{-1}$,
we know $\mathrm{rank}(\bs\Theta)=\mathrm{rank}(\qS_R-\qS_E)$ which follows from
the fact $\mathrm{rank}(\qA)+\mathrm{rank}(\qB)-n\le \mathrm{rank}(\qA\qB)\le \min\{\mathrm{rank}(\qA),
\mathrm{rank}(\qB)\}$ for two $n\times n$ matrices $\qA$ and $\qB$ \cite{Horn}.
Further, since $\mathrm{rank}(\qQ{\bs\Theta})\le \min\{\mathrm{rank}(\qQ), \mathrm{rank}(\bs\Theta)\}$,
if $\mathrm{rank}(\bs\Theta)<\mathrm{rank}(\qQ)$, then $\mathrm{rank}(\qQ{\bs\Theta})<\mathrm{rank}(\qQ)$,
but this violates (\ref{basicEq1}). Thus, $\mathrm{rank}(\bs\Theta)\ge\mathrm{rank}(\qQ)$.

\section{Proof of Lemma \ref{lem:rankQ}}\label{proofLemRankQ}

First, we can prove that $\qu_0$ is the eigenvector of $\qS_0$.
To prove this, it suffices to show that $\qS_0\qu_0=\eta_0\qu_0$ for a certain scarlar $\eta_0$.
By using the formula (\ref{matrixinverse}), we can write
\equ{
\qS_0\qu_0=\frac{\qS_1\qu_0}{1+\rho\qu_0^\dagger\qS_1\qu_0}-\frac{\qS_2\qu_0}{1+\rho\qu_0^\dagger\qS_2\qu_0}.\label{S0u0}
}
Since $\qu_0$ is the eigenvector of $(\qI_2+\rho\qS_2)^{-1}(\qI_2+\rho\qS_1)$ associated with the eigenvalue $\lambda_0$, it holds
$(\qI_2+\rho\qS_1)\qu_0=\lambda_0 (\qI_2+\rho\qS_2)\qu_0$ which leads to two facts:
$1+\rho\qu_0^\dagger\qS_1\qu_0=\lambda_0(1+\rho\qu_0^\dagger\qS_2\qu_0)$ and $(\qS_1-\lambda_0\qS_2)\qu_0=\frac{\lambda_0-1}{\rho}\qu_0$.
Inserting them into (\ref{S0u0}) gives $\qS_0\qu_0=\eta_0\qu_0$ where
$\eta_0=(\lambda_0-1)/(\rho(1+\rho\qu_0^\dagger\qS_1\qu_0))$. We also know $\eta_0=\qu_0^\dagger\qS_0\qu_0$.

Second, let $\qv$ is any unit-norm vector and $\qv\ne \xi\qu_0$.
Define the function
\al{
g(t, \qv)&=\log\det(\qI_2+\rho((1-t)\qu_0\qu_0^\dagger+t\qv\qv^\dagger)\qS_1)\nnb\\
&\quad-\log\det(\qI_2+\rho((1-t)\qu_0\qu_0^\dagger+t\qv\qv^\dagger)\qS_2),\ t\in [0,1].
}
It is easy to know: $(1-t)\qu_0\qu_0^\dagger+t\qv\qv^\dagger \in \Omega$;
when $t=0$, $(1-t)\qu_0\qu_0^\dagger+t\qv\qv^\dagger=\qu_0\qu_0^\dagger$;
when $t>0$, $(1-t)\qu_0\qu_0^\dagger+t\qv\qv^\dagger$ has rank two.
We can state that if $\frac{\partial g}{\partial t}|_{t=0}> 0$ for a certain $\qv_0$, the optimal $\qQ$ has rank two.
The reason is simple: assume that the optimal $\qQ$ has rank one, then it holds $g(0,\qv)\ge g(t,\qv)$, $\forall t, \qv$.
But on the other hand, it follows from $\frac{\partial g}{\partial t}|_{t=0}> 0$
that there exists $t_0>0$ such that $g(t_0, \qv_0)>g(0,\qv_0)$.
This produces a contradiction. By using the derivative formula \cite{Jafar}
\equ{
\frac{\partial \log\mathrm{det}(\qA+t\qB)}{\partial t}=\mathrm{Tr}\{\qB(\qA+t\qB)^{-1}\},
}
we can obtain
\equ{
\frac{\partial g}{\partial t}\big\vert_{t=0}=\mathrm{Tr}\{\rho(\qv\qv^\dagger-\qu_0\qu_0^\dagger)\qS_0\}
=\rho(\qv^\dagger\qS_0\qv-\qu_0^\dagger\qS_0\qu_0).\label{diffg}
}
It follows from (\ref{diffg}) that if $\qu_0^\dagger\qS_0\qu_0<\lambda_{\max}(\qS_0)$, there always exists $\qv$
such that $\frac{\partial g}{\partial t}|_{t=0}> 0$, and hence the optimal $\qQ$ has rank two;
if the optimal $\qQ$ has rank one, there must be
$\frac{\partial g}{\partial t}|_{t=0}\le 0$ which leads to $\qu_0^\dagger\qS_0\qu_0=\lambda_{\max}(\qS_0)$.
Further, since $\qu_0$ is the eigenvector of $\qS_0$, thus
$\qu_0^\dagger\qS_0\qu_0<\lambda_{\max}(\qS_0)$ and $\qu_0^\dagger\qS_0\qu_0=\lambda_{\max}(\qS_0)$
are equivalent to $2\qu_0^\dagger\qS_0\qu_0<\mathrm{Tr}(\qS_0)$ and $2\qu_0^\dagger\qS_0\qu_0\ge \mathrm{Tr}(\qS_0)$,
respectively. This completes the proof.

\section{Proof of Theorem \ref{theo:rankQ}}\label{proofTheoRankQ}

We discuss two cases respectively.
\begin{enumerate}
    \item [(i)] $\qS_2$ has full rank. Since $(\qI_2+\rho\qS_2)^{-1}(\qI_2+\rho\qS_1)=(\qI_2/\rho+\qS_2)^{-1}(\qI_2/\rho+\qS_1)$,
we know that as $\rho\to\infty$, $\qu_0$ goes to the eigenvector $\wt{\qu}_0$ associated with
the largest eigenvalue $\lambda$ of $\qS_2^{-1}\qS_1$.
We can write $\qS_1\wt{\qu}_0=\lambda\qS_2\wt{\qu}_0$,
hence $\wt{\qu}_0^\dagger\qS_1^2\wt{\qu}_0=\lambda\wt{\qu}_0^\dagger\qS_1\qS_2\wt{\qu}_0$,
$\wt{\qu}_0^\dagger\qS_1\wt{\qu}_0=\lambda\wt{\qu}_0^\dagger\qS_2\wt{\qu}_0$. Then
\al{
g(\rho)&\to \frac{\wt{\qu}_0^\dagger\qS_1^2\wt{\qu}_0}{\wt{\qu}_0\qS_1\wt{\qu}_0}
-\frac{\wt{\qu}_0\qS_2^2\wt{\qu}_0}{\wt{\qu}_0\qS_2\wt{\qu}_0}-\mathrm{Tr}(\qS_1-\qS_2)
=\frac{\wt{\qu}_0^\dagger(\qS_1-\qS_2)\qS_2\wt{\qu}_0}{\wt{\qu}_0^\dagger\qS_2\wt{\qu}_0}
-\mathrm{Tr}(\qS_1-\qS_2)\nnb\\
&\le \lambda_{\max}(\qS_1-\qS_2)-\mathrm{Tr}(\qS_1-\qS_2)< 0
}
where we use the fact $\qx^\dagger\qA\qx/(\qx^\dagger\qB\qx)\le \lambda_{\max}(\qA\qB^{-1})$.
    \item [(ii)] $\qS_2$ has rank one and hence can be expressed as $\qS_2=\qv_2\qv_2^\dagger$.
As $\rho\to \infty$,
\al{
g(\rho)&\to \frac{\qu_0^\dagger\qS_1^2\qu_0}{\qu_0^\dagger\qS_1\qu_0}
-\frac{\qu_0^\dagger\qS_2^2\qu_0}{\qu_0^\dagger\qS_2\qu_0}-\mathrm{Tr}(\qS_1-\qS_2)
=\frac{\qu_0^\dagger\qS_1^2\qu_0}{\qu_0^\dagger\qS_1\qu_0}
-\mathrm{Tr}(\qS_2)-\mathrm{Tr}(\qS_1-\qS_2)\nnb\\
&\le \lambda_{\max}(\qS_1)-\mathrm{Tr}(\qS_1)<0.
}
\end{enumerate}

\section{Proof of Theorem \ref{theo:2}}\label{proofTheo2}

According to Property \ref{prop:1}, we know $\qQ(\qS_R-\qS_E)\qQ\succeq 0$.
We will show that under the condition of Theorem \ref{theo:2},
any $\qQ\succeq 0$ that satisfies $\qQ(\qS_R-\qS_E)\qQ\succeq 0$ has rank one.
To prove this, we use the mathematical induction which consists of two steps:
(1) showing that the statement holds when $n_T=2$;
(2) showing that if the statement holds for some $n_T=k\ge 2$, then the statement also holds when
$k+1$ is substituted for $k$.

First, we address (1). It follows from Property \ref{prop:2} that the optimal $\qQ$ has low rank, i.e., rank one.
Next we deal with (2).
We denote the eigen-decomposition $\qS_R-\qS_E=\qU\qD\qU^\dagger$ where $\qD$
is a diagonal matrix with all negative diagonal entries except a positive one.
Let $\qX=\qU^\dagger\qQ\qU$. Then $\qQ(\qS_R-\qS_E)\qQ\succeq 0$ is equivalent to $\qX\qD\qX\succeq 0$.
It suffices to prove the following:

\begin{quote}
{\it Problem}: Assume the following is true: any $k\times k$ diagonal matrix $\qD$ which has $k-1$ negative
entries and one positive one in its diagonal, then any $k\times k$ matrix $\qX\succeq 0$ that satisfies
$\qX\qD\qX\succeq 0$ always has rank one. Is it true for $k+1$?
\end{quote}
Denote the $(k+1)\times (k+1)$ diagonal matrix
\equ{
\qD=\left(%
\begin{array}{cc}
  \qD_1 & \mb{0} \\
  \mb{0}^\dagger & -\xi \\
\end{array}%
\right)
}
where $\qD_1$ is a $k\times k$ diagonal matrix with all negative diagonal entries except a positive one, $\xi>0$.
We also denote the $(k+1)\times (k+1)$ positive semi-definite matrix
\equ{
\qX=\left(%
\begin{array}{cc}
  \qX_1 & \mb{b} \\
  \mb{b}^\dagger & x \\
\end{array}%
\right)\label{X}
}
where $\qX_1$ is a $k\times k$ matrix, $\mb{b}$ is a $k\times 1$ vector, $x$ is a scalar.
It follows from $\qX\succeq 0$ that: $\qX_1\succeq 0$, $x\ge 0$; if $x=0$, then $\mb{b}=\mb{0}$ \cite{Horn}.
Now we can write
\equ{
\qX\qD\qX=\left(%
\begin{array}{cc}
  \qX_1\qD_1\qX_1-\xi \mb{b}\mb{b}^\dagger & \qX_1\qD_1\mb{b}-\xi x\mb{b} \\
   \mb{b}^\dagger\qD_1\qX1-\xi x\mb{b}^\dagger & \mb{b}^\dagger\qD_1\mb{b}-\xi x^2 \\
\end{array}%
\right).\label{XDX0}
}
When $x=0$, there will be $\mb{b}=0$ and hence $\qX\qD\qX=\mathrm{diag}(\qX_1\qD_1\qX_1, 0)\succeq 0$ which is equivalent to
$\qX_1\qD_1\qX_1\succeq 0$. With this, based on the assumption for $k$,
 $\qX_1$ has rank one, thus, $\qX$ has rank one. In the following, we consider $x>0$.
Combining (\ref{XDX0}) with $\qX\qD\qX\succeq 0$ results in $\qX_1\qD_1\qX_1-\xi \mb{b}\mb{b}^\dagger\succeq 0$
and $\mb{b}^\dagger\qD_1\mb{b}-\xi x^2\ge 0$ \cite{Horn}. From $\mb{b}^\dagger\qD_1\mb{b}-\xi x^2\ge 0$,
we get $\mb{b}^\dagger\qD_1\mb{b}\ge\xi x^2> 0$ (hence $\mb{b}\ne 0$).
It follows from $\qX_1\qD_1\qX_1-\xi \mb{b}\mb{b}^\dagger\succeq 0$ that
$\qX_1\qD_1\qX_1 \succeq 0$. With this, by using the assumption for $k$,
we know $\qX_1$ has rank one and can be expressed as $\qX_1=\qv\qv^\dagger$.
Thus, $\qX_1\qD_1\qX_1=(\qv^\dagger\qD_1\qv)\qv\qv^\dagger$ and $\qv^\dagger\qD_1\qv\ge 0$. We can prove that $\qv=c_1\mb{b}$ for a certain scalar $c_1$.
To see why this is the case, let us rewrite
$\qX_1\qD_1\qX_1-\xi \mb{b}\mb{b}^\dagger=(\qv^\dagger\qD_1\qv)\qv\qv^\dagger-\xi \mb{b}\mb{b}^\dagger$.
Note that $\qv^\dagger\qD_1\qv\ge 0$, hence according to Lemma \ref{lem:2}, if $\qv\ne c_1\mb{b}$,
then $(\qv^\dagger\qD_1\qv)\qv\qv^\dagger-\xi \mb{b}\mb{b}^\dagger$ always has a negative eigenvalue.
But this violates $\qX_1\qD_1\qX_1-\xi \mb{b}\mb{b}^\dagger\succeq 0$. Thus, $\qv=c_1\mb{b}$.
Next, we determine $c_1$. Now we can write $\qX_1\qD_1\qX_1-\xi \mb{b}\mb{b}^\dagger=(\mb{b}^\dagger\qD_1\mb{b}|c_1|^4-\xi)\mb{b}\mb{b}^\dagger$
and
\equ{
\qX\qD\qX=\left(%
\begin{array}{cc}
  (\beta|c_1|^4-\xi)\mb{b}\mb{b}^\dagger & (\beta|c_1|^2-\xi x)\mb{b} \\
   (\beta|c_1|^2-\xi x)\mb{b}^\dagger & \beta-\xi x^2 \\
\end{array}%
\right)\label{XDX}
}
where $\beta=\mb{b}^\dagger\qD_1\mb{b}$.
Since $\qX\qD\qX\succeq 0$, we have
\equ{
\beta|c_1|^4-\xi\ge 0, (\beta|c_1|^4-\xi)(\beta-\xi x^2)\ge (\beta|c_1|^2-\xi x)^2
}
which results in $|c_1|^2=1/x$.
With this, we can rewrite (\ref{X}) as
\equ{
\qX=\left(%
\begin{array}{cc}
  |c_1|^2\mb{b}\mb{b}^\dagger & \mb{b} \\
  \mb{b}^\dagger & x \\
\end{array}%
\right)=\frac{1}{x}\left(%
\begin{array}{c}
  \mb{b} \\
  x \\
\end{array}%
\right)\left(%
\begin{array}{c}
  \mb{b} \\
  x \\
\end{array}%
\right)^\dagger
}
which leads to $\qX$ has rank one. This completes the proof.

\begin{figure}[h]
\centering
\includegraphics[width=3.5in]{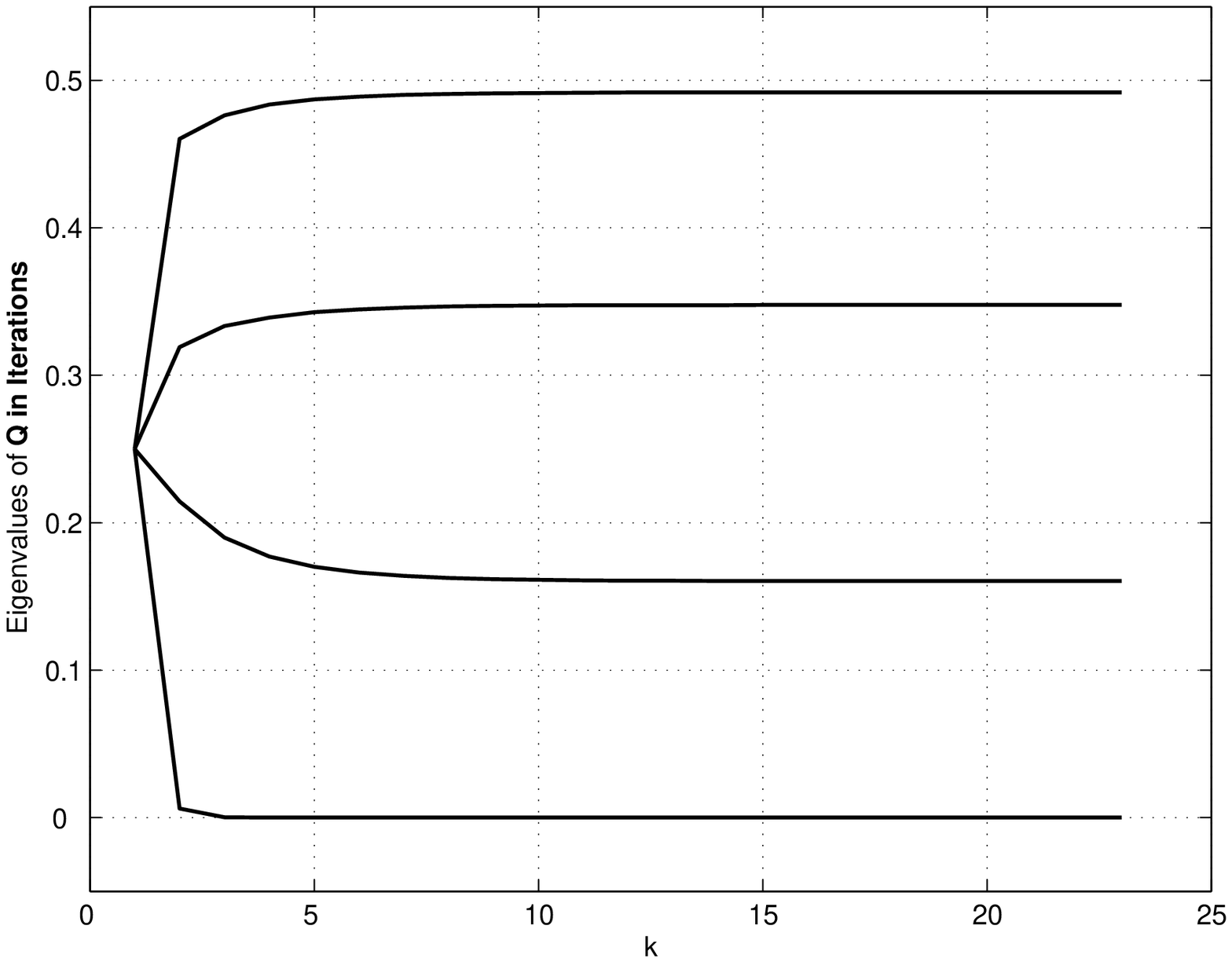}
\caption{Eigenvalues of $\qQ^k$ in iterations, positive definite case.}
\label{fig:1}
\end{figure}

\begin{figure}[h]
\centering
\includegraphics[width=3.5in]{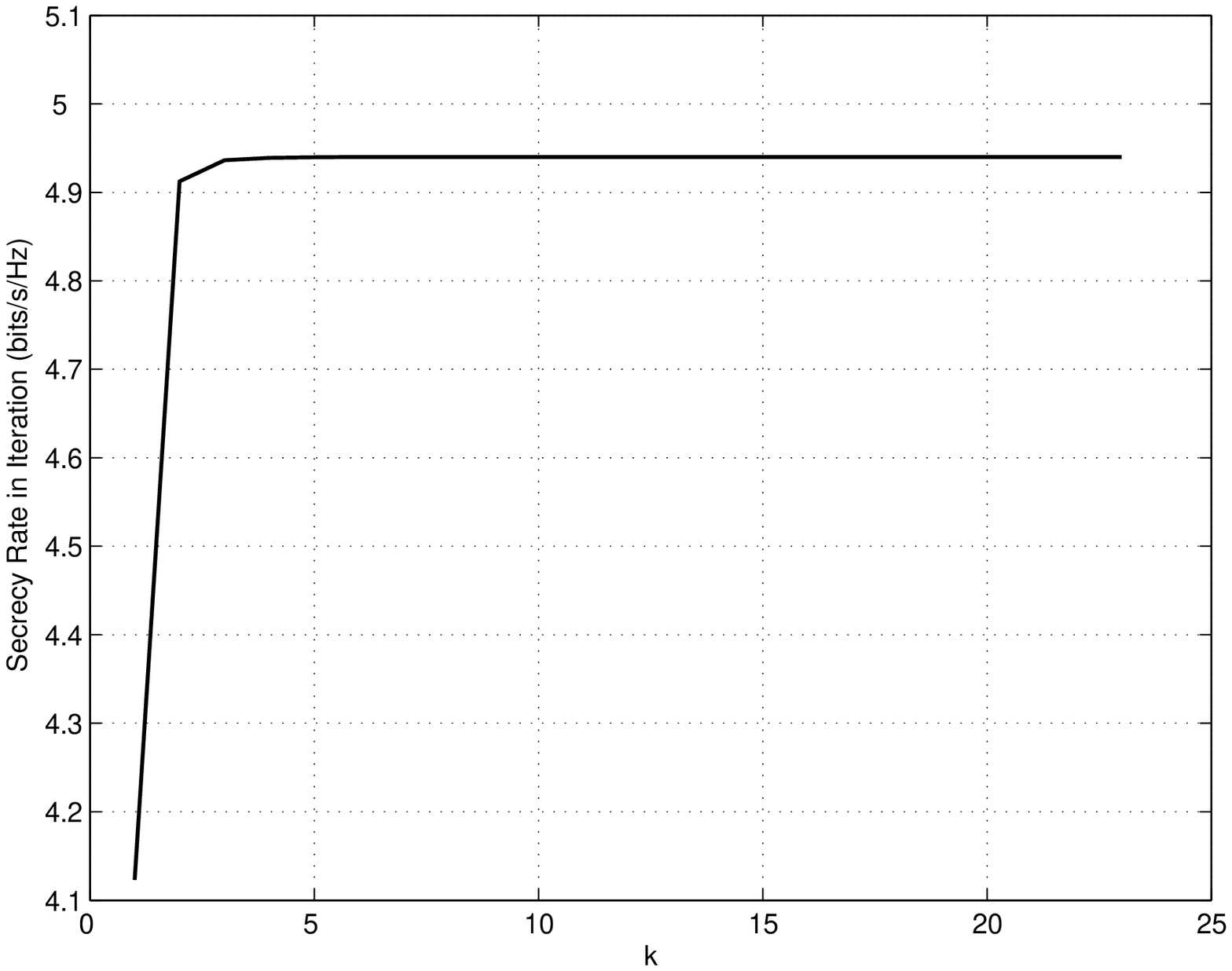}
\caption{Secrecy rate in iterations, positive definite case.}
\label{fig:2}
\end{figure}

\begin{figure}[h]
\centering
\includegraphics[width=3.5in]{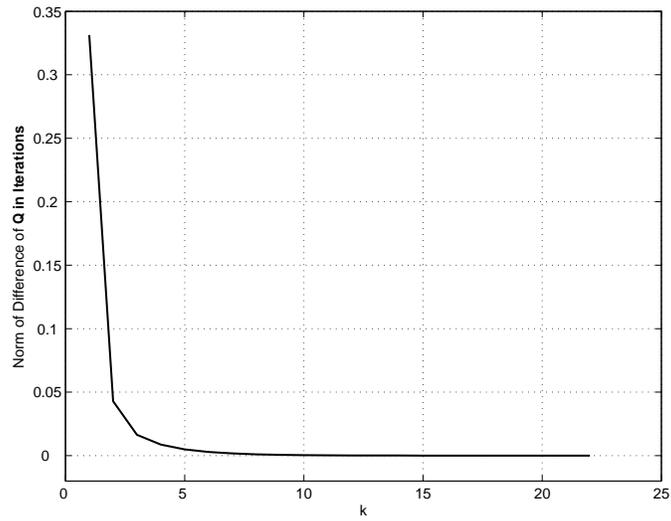}
\caption{$\|\qQ^{k+1}-\qQ^k\|$, positive definite case.}
\label{fig:3}
\end{figure}

\begin{figure}[h]
\centering
\includegraphics[width=3.5in]{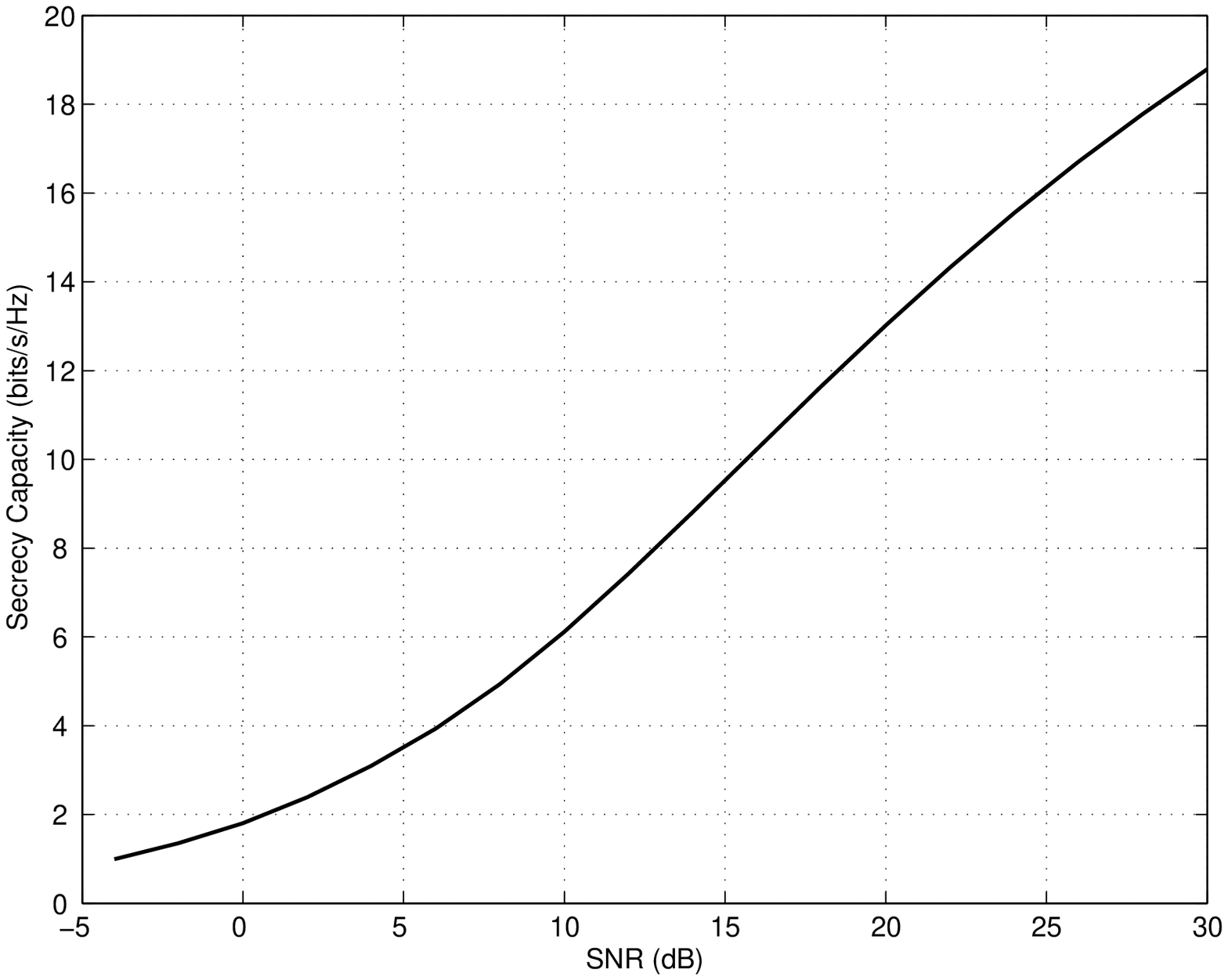}
\caption{Secrecy capacity Vs. SNR, positive definite case.}
\label{fig:4}
\end{figure}

\begin{figure}[h]
\centering
\includegraphics[width=3.5in]{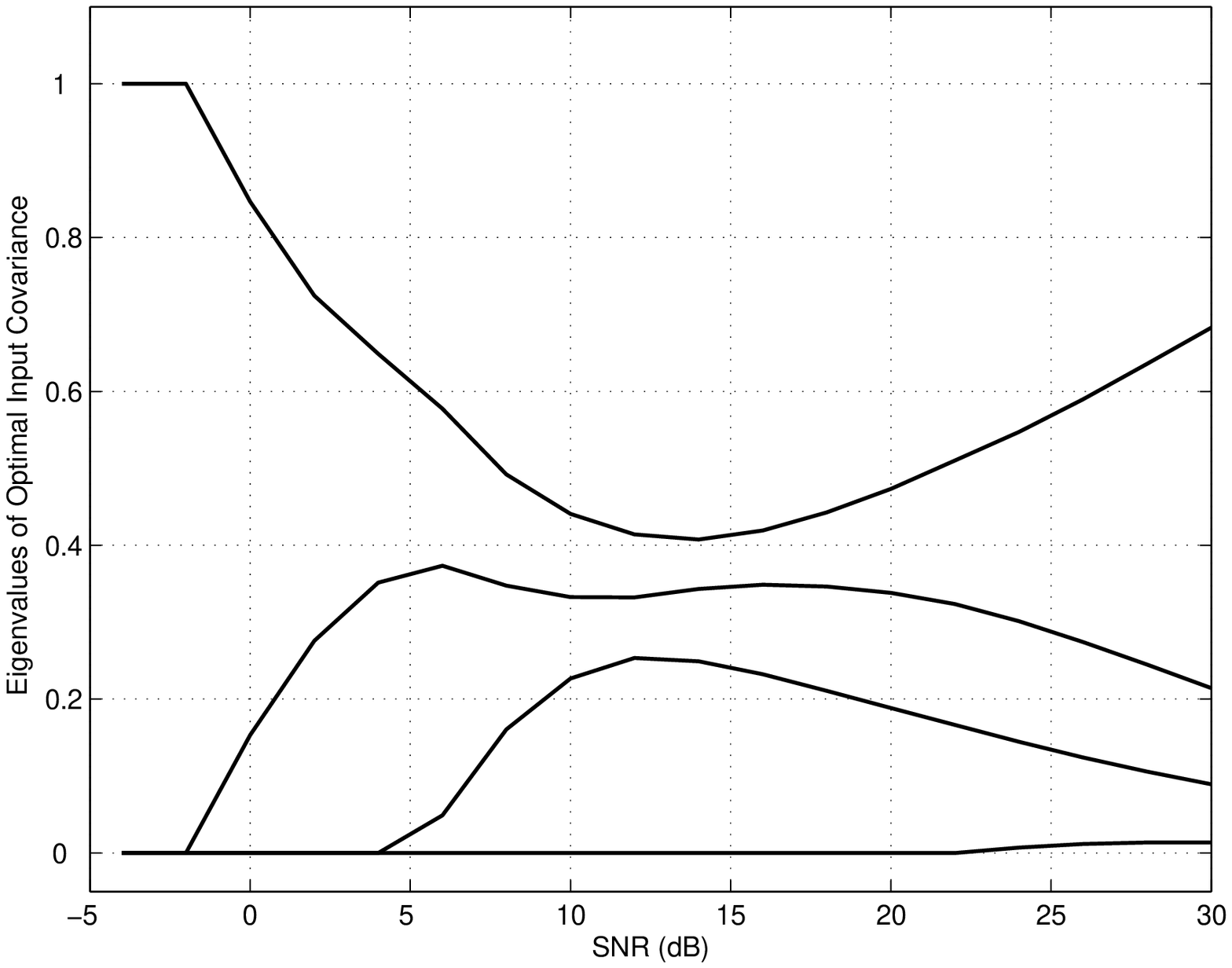}
\caption{Eigenvalues of optimal input covariance Vs. SNR, positive definite case.}
\label{fig:5}
\end{figure}

\begin{figure}[h]
\centering
\includegraphics[width=3.5in]{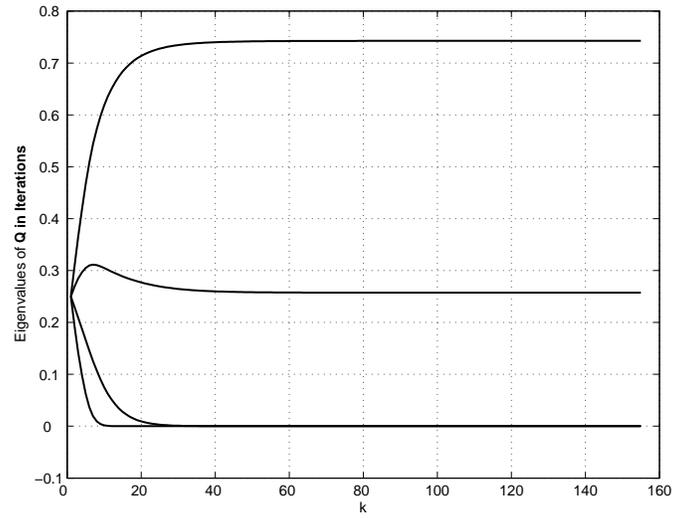}
\caption{Eigenvalues of $\qQ^k$ in iterations, non positive definite case.}
\label{fig:6}
\end{figure}

\begin{figure}[h]
\centering
\includegraphics[width=3.5in]{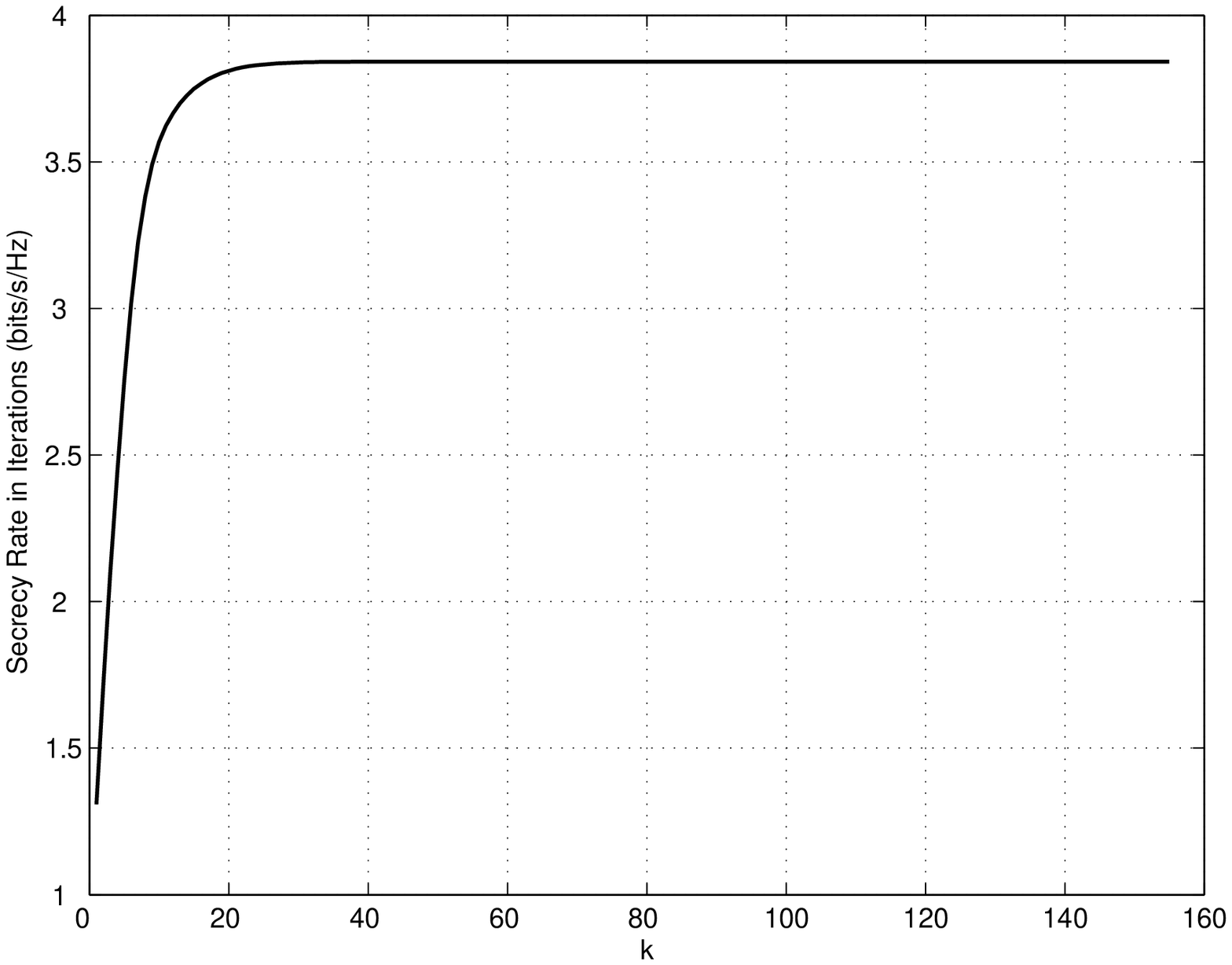}
\caption{Secrecy rate in iterations, non positive definite case.}
\label{fig:7}
\end{figure}

\begin{figure}[h]
\centering
\includegraphics[width=3.5in]{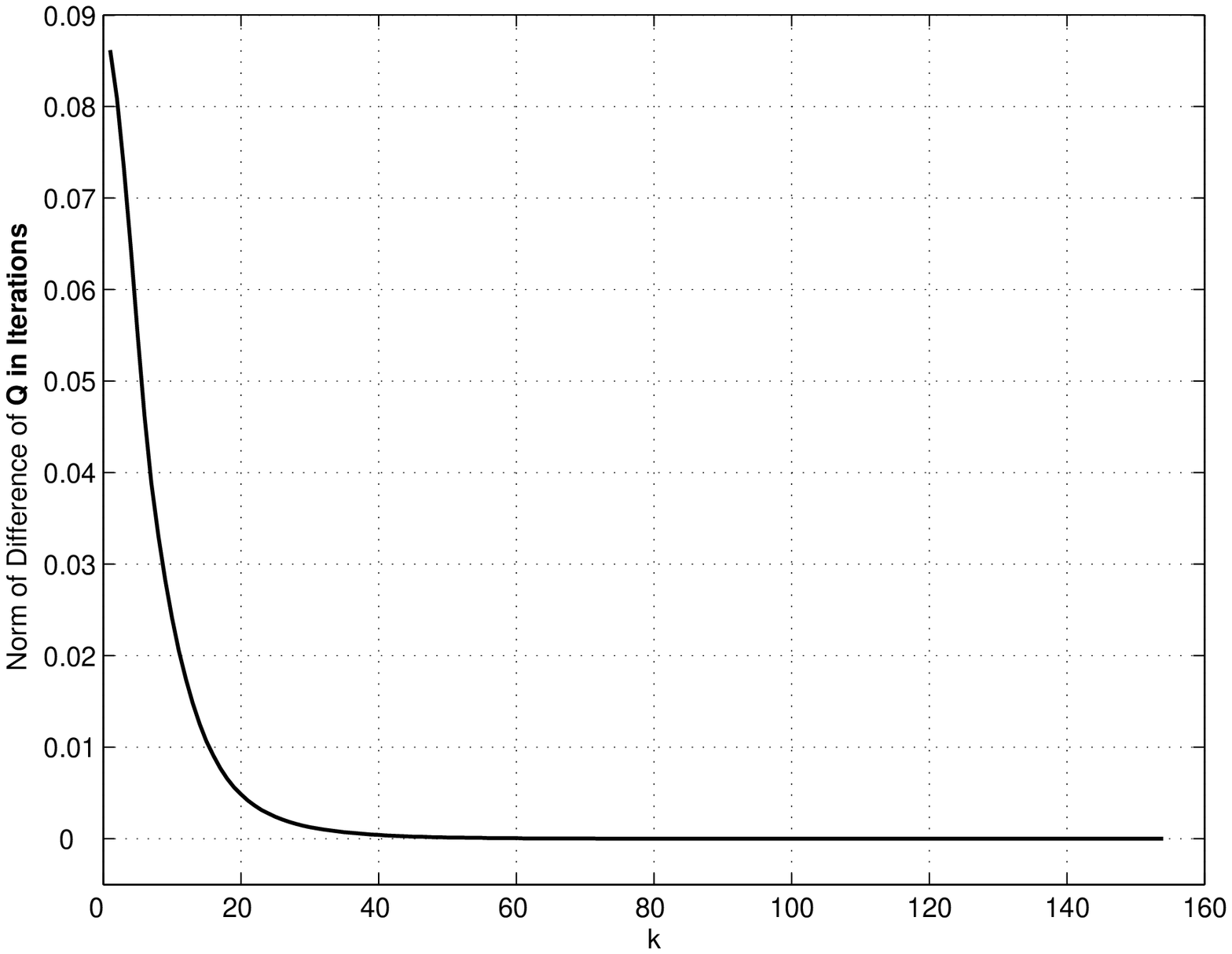}
\caption{$\|\qQ^{k+1}-\qQ^k\|$, non positive definite case.}
\label{fig:8}
\end{figure}

\begin{figure}[h]
\centering
\includegraphics[width=3.5in]{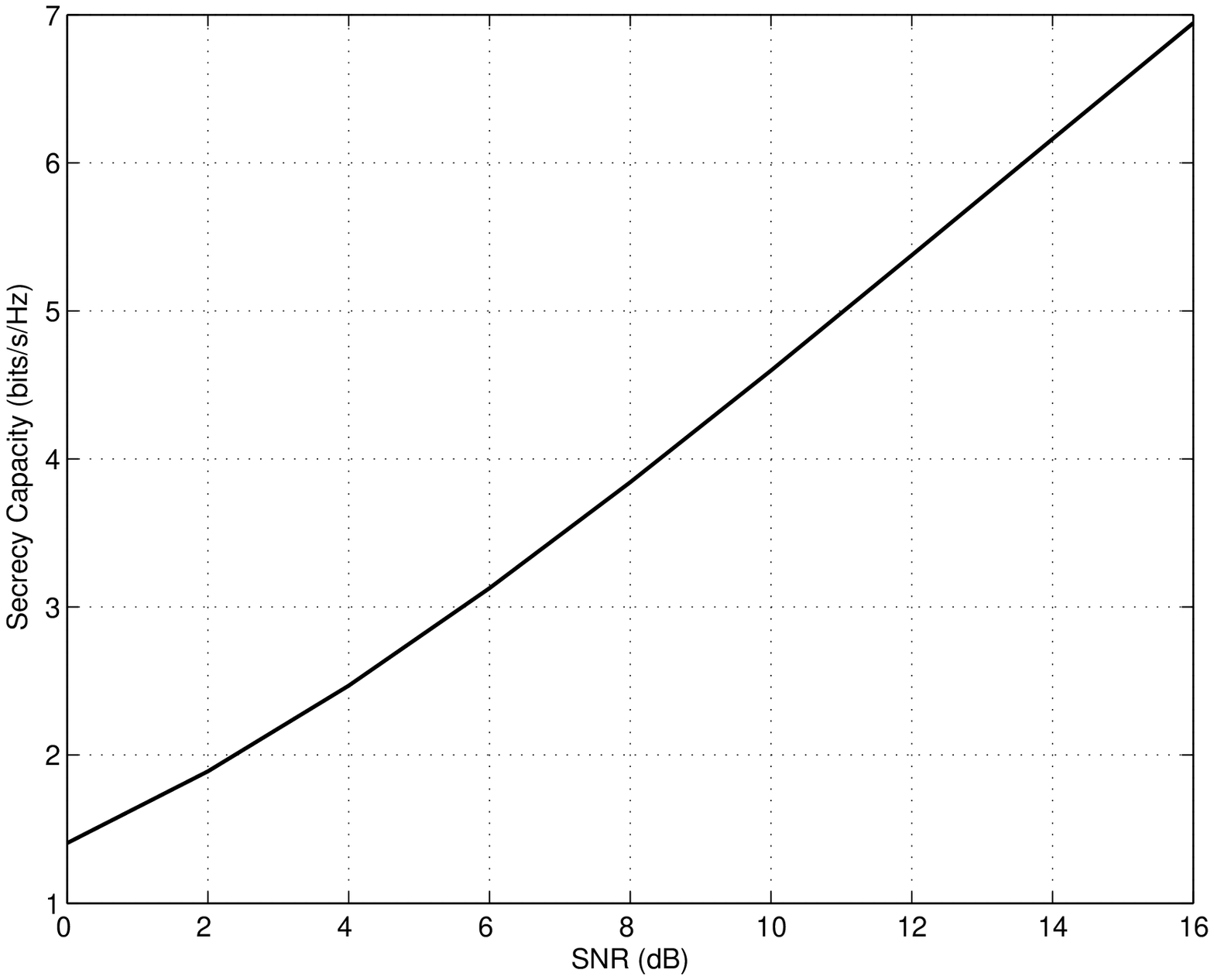}
\caption{Secrecy capacity Vs. SNR, non positive definite case.}
\label{fig:9}
\end{figure}

\begin{figure}[h]
\centering
\includegraphics[width=3.5in]{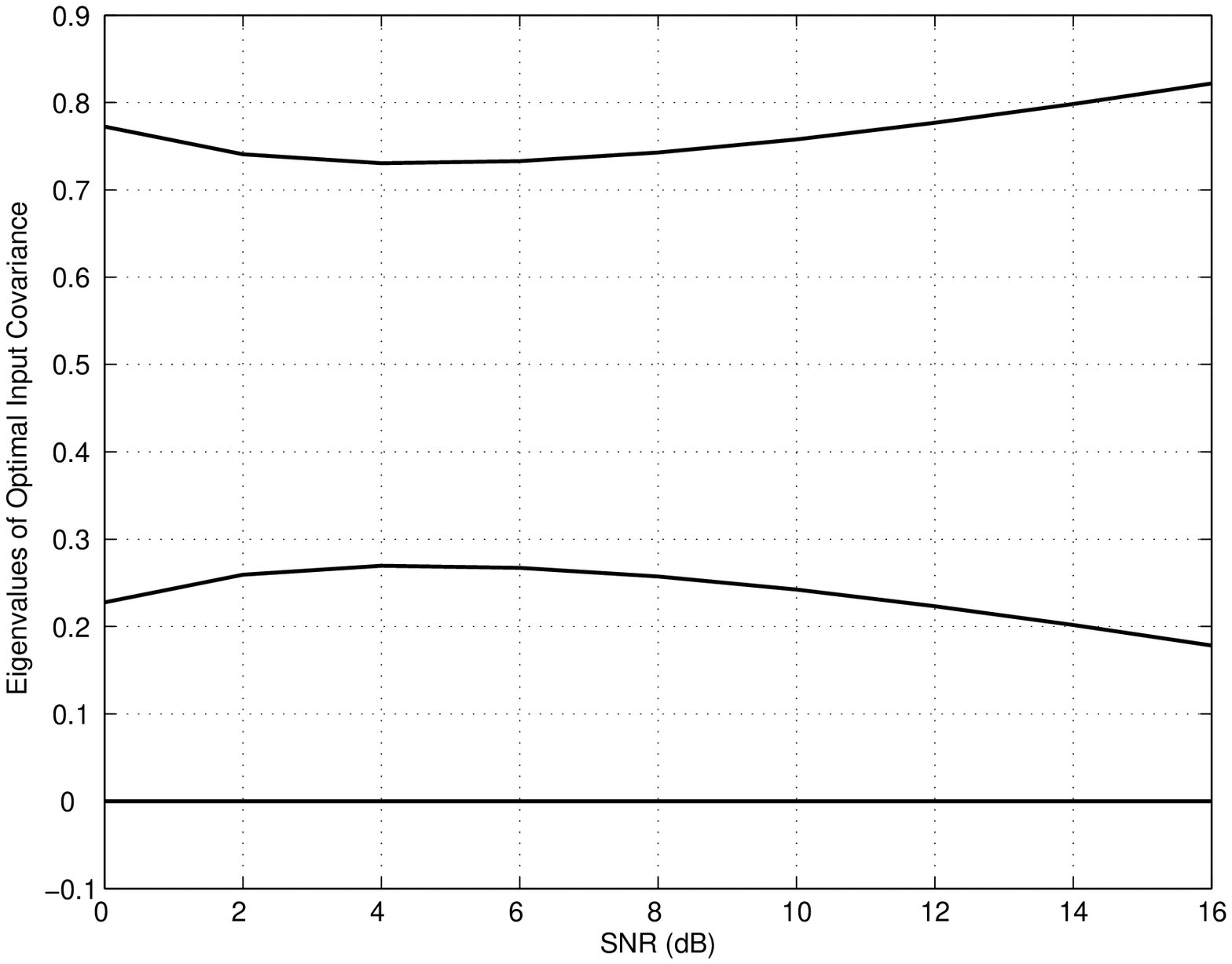}
\caption{Eigenvalues of optimal input covariance Vs. SNR, non positive definite case.}
\label{fig:10}
\end{figure}

\end{document}